
\documentclass[journal]{IEEEtran}

\usepackage{cite}
\usepackage{algorithm,algpseudocode}

\usepackage[dvips]{graphicx}
\usepackage[font=small]{caption}
\usepackage{subfig}
\usepackage{float}

\usepackage{amssymb}
\usepackage{multirow}
\usepackage{colortbl}
\usepackage{array, tabularx}
\usepackage{color}
\usepackage{amsthm}
\usepackage{amsmath}
\usepackage{setspace}
\usepackage{lettrine}
\usepackage{ctable}

\hyphenation{op-tical net-works semi-conduc-tor}

\begin{document}
	

\title{
\onehalfspacing
On Reusing Pilots Among Interfering Cells in Massive MIMO
}
%
%
%

\author{Jy-yong Sohn, Sung Whan Yoon,~\IEEEmembership{Student Member,~IEEE,}
	and~Jaekyun~Moon,~\IEEEmembership{Fellow,~IEEE}
\thanks{The authors are with the School
of Electrical Engineering, Korea Advanced Institute of Science and Technology, Daejeon, 305-701, Republic of Korea (e-mail: jysohn1108@kaist.ac.kr, shyoon8@kaist.ac.kr, jmoon@kaist.edu). This paper was presented in part at IEEE ICC 2015 Workshop on 5G \& Beyond \cite{sohn2015pilots}.}
}

%
%

\markboth{submitted to IEEE Transactions on Wireless Communications}%
{Shell \MakeLowercase{\textit{et al.}}: Bare Demo of IEEEtran.cls for Journals}
%



\maketitle
\begin{abstract}
	Pilot contamination, caused by the reuse of pilots among interfering cells, remains as a significant
	obstacle that limits the performance of massive multi-input multi-output antenna systems. To
	handle this problem, less aggressive reuse of pilots involving allocation of additional pilots
	for interfering users is closely examined in this paper. Hierarchical pilot reuse methods are proposed,
	which effectively mitigate pilot contamination and increase the net throughput of the system. Among
	the suggested hierarchical pilot reuse schemes, the optimal way of assigning pilots to different users
	is obtained in a closed-form solution which maximizes the net sum-rate in a given coherence time.
	Simulation results confirm that when the ratio of the
	channel coherence time to the number of users in each cell is sufficiently large, less aggressive reuse of
	pilots yields significant performance advantage relative to the case where all cells reuse the same pilot
	set.
\end{abstract}

\begin{IEEEkeywords}
Massive MIMO, Multi-user MIMO, Multi-cell MIMO, Pilot contamination, Pilot assignment, Pilot reuse, Interference, Large-scale antenna system, Channel estimation
\end{IEEEkeywords}


%

\markboth{submitted to IEEE Transactions on Wireless Communications}%
{Shell \MakeLowercase{\textit{et al.}}: Bare Demo of IEEEtran.cls for Journals}

\section{Introduction}
%
%
%
%

\lettrine{S}{upporting} the stringent requirements of the next generation wireless communication systems is an ongoing challenge, especially given the foreseeable scenarios where massively-deployed devices rely on applications with high-throughput (e.g. virtual/augmented reality) and/or low latency (e.g. autonomous vehicles). This unavoidable trend gives rise to discussions on the technology for next generation communications, collectively referred to as 5G. 
Regarding the engineering requirements of high data rate, low latency, and high energy efficiency, attractive 5G technologies in \cite{andrews2014will, boccardi2014five, Osseiran2014, Wang2014} include massive multi-input multi-output (MIMO) \cite{rusek2013scaling,larsson2014massive, lu2014overview}, ultra-densification \cite{gotsis2016ultradense, bhushan2014network,ge20165g} and millimeter wave (mm-Wave) communications among other things.

Massive MIMO, or a large-scale antenna array system, is the deployment of a very large number of antenna elements at base stations (BSs), possibly orders of magnitude larger than the number of user terminals (UTs) served by each BS \cite{marzetta2006much, marzetta2010noncooperative}. Assuming $M$, the number of BS antennas, increases without bound, the asymptotic analysis on capacity and other fundamental aspects of massive MIMO were presented in \cite{marzetta2010noncooperative, ngo2013energy}. 
The work of \cite{marzetta2010noncooperative}, in particular, has demonstrated that in time-division duplex (TDD) operation with uplink training for attaining channel-state-information (CSI), the effects of uncorrelated noise and fast fading disappear as $M$ grows to infinity, with no cooperation necessary among BSs.
According to this ``channel-hardening" behavior, \cite{marzetta2010noncooperative} concluded that 
in a single-cell setting where UTs have orthogonal pilots, capacity increases linearly with $K$, the number of UTs, as $M/K$ increases even without channel knowledge. However, in multi-cell setting, the channel estimation errors due to pilot contamination prevent the linear growth of the sum rate with $K$.
Pilot contamination is caused by the reuse of pilots among different cells
and persists even as the number of BS antennas increases without limit; it remains as a fundamental issue in massive MIMO \cite{elijah2015comprehensive}.

Various researchers have since investigated systematic methods to mitigate the pilot contamination effect 
\cite{yin2013decontaminating,yin2013coordinated,jose2011pilot,ashikhmin2012pilot,appaiah2010pilot,fernandes2013inter,vu2014successive,lee2014mitigation, zhu2015smart, nguyen2015resource, li2015multi, bjornson2016massive, saxena2015mitigating, sohn2015pilots}.
Some researchers exploited the angle-of-arrival (AoA) information to combat pilot contamination.
A coordinated pilot assignment strategy was suggested based on AoA information which can eliminate channel estimation error as the number of BS antennas increases without bound \cite{yin2013decontaminating}, \cite{yin2013coordinated}. However, favorable AoA distributions are not always guaranteed in real scenarios. In addition, significant challenges remain on actually getting AoA information and managing network overload for cooperation.

Another approaches have focused on precoding to reduce the pilot contamination effect. 
The work of \cite{jose2011pilot} suggested a distributed single-cell linear minimum mean-square-error (MMSE) precoding, which exploits pilot sequence information to 
minimize the error caused by inter-cell and intra-cell interference. The proposed precoding scheme has improved performance compared to the conventional single-cell zero-forcing (ZF) precoder, but cannot completely eliminate pilot contamination. In \cite{ashikhmin2012pilot}, outer multi-cellular precoding under the assumption of cooperating BSs was introduced for mitigating the pilot contamination effect. The suggested method is based on adjusting the precoding vector according to the contaminated channel estimate via coordinated information. Backhaul network overload for realizing cooperation remains as an issue in realizing this precoding scheme. 

The impact of pilot transmission protocol is considered in \cite{appaiah2010pilot,fernandes2013inter,vu2014successive,lee2014mitigation}. Shifting of pilot frames corresponding to neighboring cells was proposed, which mitigated pilot contamination \cite{appaiah2010pilot}. Appropriate power allocation to increase the signal-to-interference ratio (SIR) in shifted pilot frame was also suggested \cite{fernandes2013inter}. Even though the shifted pilot frame method avoided correlation between identical pilot sequences, it caused the correlation between pilot sequence and data sequence in the training phase. Central control is also required to achieve precise timing among shifted pilot frames, which needs additional overload. Another pilot transmission protocol to eliminate pilot contamination was proposed in \cite{vu2014successive} based on imposing a silent phase for each cell
over successive time slots during the training period. A linear combination of 
observations taken over the successive training phases allow simple elimination 
of inter-cell interferences while fully reusing pilots across cells, but the method does not offer any advantage in terms of reducing total training time overhead
compare to the straightforward employment of $LK$ orthogonal pilot symbols over all $LK$ users across entire cells.

Pilot assignment strategies are also considered as an alternative to mitigate pilot contamination and to increase net throughput \cite{lee2014mitigation, zhu2015smart, nguyen2015resource, ngo2016cell}. 
According to the strategy suggested in \cite{lee2014mitigation}, the cell is divided into the center area and the edge area; the users in the center employ the same pilot resource non-cooperatively, while cooperative resource allocation is applied to those in the edge. This strategy mitigates severe pilot contamination for edge users, but still requires cooperation among BSs.
A greedy algorithm (called smart pilot assignment) is suggested in \cite{zhu2015smart}, which assigns pilot with less inter-cell interference to the user with low channel quality. This approach improved the minimum signal-to-interference-plus-noise ratio (SINR) within each cell compared to random assignment case, but requires cooperation between BSs in order to exploit slow fading coefficients of entire cells.
In \cite{nguyen2015resource}, spectral efficiency (SE) was maximized with respect to pilot assignment, power allocation and the number of antennas.
Considering cell-free massive MIMO systems where $M$ antennas and $K$ users are dispersed in a region, the authors of \cite{ngo2016cell} jointly optimized greedy pilot assignment and power control to increase the achievable rate. However, the greedy pilot assignment requires backhaul network, involving cooperation between distributed antennas.
According to these papers, optimal pilot assignment can increase SE compared to random assignment, while finding the optimal solution also require BS cooperation. Moreover, \cite{ngo2016cell} considers pilot assignment and power control for cell-free distributed antenna systems, whereas the present paper considers pilot allocation in multi-cellular systems with co-located antenna setting.

As seen above, most of the known effective solutions to combat pilot contamination tend to rely heavily on cooperation among BSs, leading to backhaul overload issues.
Also, most of the previous works 
assume full reuse of the same pilot set among all cells
and rarely discuss the 
potential associated with allowing more orthogonal pilots.

Some researchers investigated the effect of employing less aggressive pilot reuse methods, but to limited extents  \cite{li2015multi, bjornson2016massive,  saxena2015mitigating}. A scenario of utilizing the number of pilots greater than $K$ is considered in \cite{li2015multi}, which suggested a multi-cell MMSE detector exploiting all pilots in the system.  
It is shown that the suggested multi-cell MMSE scheme can significantly increase spectral efficiency as the pilot reuse factor becomes less aggressive.
An SE-maximizing massive MIMO system is considered in \cite{bjornson2016massive}, which optimizes the number of scheduled users $K$ for given $M$ and pilot reuse factor. The simulation result shows that the optimal solution selects less aggressive pilot reuse for some practical scenarios.
A pilot reuse factor of 3 (cell partitioning similar to frequency reuse of 3) is considered in \cite{saxena2015mitigating}, which is shown to be beneficial to mitigate pilot contamination. 
However, only symmetric pilot reuse patterns (lattice structure) was considered in \cite{saxena2015mitigating}, and no
closed-form solution was found for the optimal pilot reuse
factor; as such, the impact of partial pilot reuse was not made clear.

In contrast, in \cite{sohn2015pilots} the present authors presented a systematically-constructed pilot reuse method to effectively reduce
pilot contamination. Non-trivial hierarchical pilot reuse/assignment schemes are proposed, and a closed-form solution to optimum pilot assignment 
is presented that maximizes the net sum-rate. The
present paper further analyzes the pilot assignment strategy proposed in \cite{sohn2015pilots}, providing formal
proofs of the previously presented mathematical results and offering key insights into
the physical significance of the optimal pilot assignment rule, along the way. 
Moreover, the present paper adds the analysis and finds optimal pilot assignment associated with a large but finite number of antennas,
whereas \cite{sohn2015pilots} only provided an asymptotic result for BS with an infinite number of antennas. 
Physical insights on the optimal assignment for infinite $M$ can be similarly observed for the finite $M$ case, while numerical results show that the net throughput between optimal assignment and conventional full pilot reuse still has a substantial gap, even in practical scenarios of deploying $100-1000$ BS antennas.
 The present paper also investigates the ideal portion of pilot training allocated for the optimal assignment strategy; this investigation points to the interesting fact that in the optimal scheme a non-vanishing portion of the coherence interval is reserved for the pilots as $N_{coh}/K$ grows. 
 
 
Overall, this paper is about finding the optimal portion of pilot transmission as well 
as pilot reuse strategy which maximize the net sum-throughput, taking into account pilot contamination due to interfering cells. 
This problem arises in some meaningful practical scenarios but has not been addressed previously. 
Although the optimal portion of pilot training was considered by several researchers \cite{ngo2014massive, hassibi2003much, marzetta2006much}, none of their works or any other previous related works to our best knowledge addressed the optimal pilot length and optimal pilot assignment in massive MIMO, where pilot contamination limits the performance and no constraint is imposed on the specific pilot reuse method.

Compared to the ideal asymptotic analysis in \cite{marzetta2010noncooperative}, several works attempted to emphasize the practical aspects of applying massive MIMO in real-world systems. Considering BSs with a finite $M$, the achievable rate is obtained \cite{hoydis2013massive} when matched-filter (MF) or MMSE detection is used. Focusing on some propagation environments where the channel hardening does not hold, \cite{ngo2016no} suggested a downlink blind channel estimation method. The behavior of massive MIMO systems with non-ideal hardware is observed in \cite{bjornson2014massive}.

This paper is organized as follows. Section II gives an overview of the system model for massive MIMO and the 
pilot contamination effect.
Section III discusses the assumptions made 
on cell geometry and basic partitioning steps needed to utilize
longer pilots, and presents the mathematical analysis for specifying the optimal pilot assignment strategy.
The mathematical
results are first stated in \cite{sohn2015pilots}, while the formal proofs and the key insights leading to physical interpretations of the obtained
results are provided in the present paper.
In Section IV, simulation results on the performance of the optimal pilot assignment are presented.
Section V includes further comments on finite $M$ analysis, optimal portion of pilots, cell partitioning in conventional frequency reuse, consideration of prioritized users and application in ultra-dense networks.  
Section VI finally draws conclusions.

\section{Pilot Contamination Effect for Massive MIMO Multi-Cellular System}

\subsection{System Model}\label{Section:SystemModel}
We assume that the network consists of $L$ hexagonal cells with $K$ users per cell who are uniform-randomly located.
Downlink CSIs are estimated at each BS by uplink pilot training assuming channel reciprocity in TDD operation. 
This paper assumes the channel model in \cite{bjornson2016massive}; this model is appropriate for sufficiently large ($M>32$) numbers of BS antennas, as tested in 
\cite{gao2015massive}.
The singular value spread of the measured channel is close to that of an i.i.d. Rayleigh channel model for a large number of antennas \cite{gao2015massive}. Therefore, we adopt this model.
The complex propagation coefficient $g$ of a link can be decomposed into a complex fast fading factor $h$ and a slow fading factor $\beta$. Therefore, the channel between the $m^{th}$ BS antenna of the $j^{th}$ cell and the $k^{th}$ user of the $l^{th}$ cell is modeled as
$g_{mjkl}=h_{mjkl}\sqrt{\beta_{jkl}}.$
The slow fading factor, which accounts for the geometric attenuation and shadow fading, is modeled as
$\beta_{jkl}=(\frac{1}{r_{jkl}})^\gamma$
where $r_{jkl}$ is the distance between the $k^{th}$ user in the $l^{th}$ cell and the base station in the $j^{th}$ cell. The parameter $\gamma$ represents the signal decay exponent.
Let $T_{coh}$ be the coherence time interval and $T_{del}$ be the channel delay spread. It is convenient to express the coherence time interval as a dimensionless quantity, $N_{coh}=T_{coh}/T_{del}$, via normalization by channel delay spread.

\subsection{Pilot Contamination Effect}
The pilot contamination effect is the most serious issue that arises in multi-cell TDD systems with very large BS antenna arrays. 
For uplink training, each BS collects pilot sequences sent by its users. Usually, orthogonal pilot sequences are assigned to users in a cell so that the channel estimate for each user 
does not suffer from interferences from other users in the same cell. However, the use of the same pilot sequences for users in other cells cause the channel estimates to be contaminated, 
and this effect, called pilot contamination, limits the achievable rate, even as $M$, the number of BS antennas, tends to infinity. 

According to \cite{marzetta2010noncooperative}, under the assumption of a single user per cell, the achievable rate during uplink data transmission for the user in the $j^{th}$ cell contaminated by users with the same pilot on other cells is given for a large $M$ by
\begin{equation} \label{achievableR}
\log_{2} \left(1+\frac{\beta_{jj}^{2}}{\sum_{l\neq j}\beta_{jl}^{2}}\right)
\end{equation}
where $\beta_{jl}$ is the slow fading component of the channel between the  $j^{th}$ BS and the interfering user in the $l^{th}$ cell. In the limit of large $M$, the achievable rate depends only on the ratio of the signal to interference due to the pilot reuse.

\section{Pilot Assignment Strategy}
In this section, we provide analysis on how much time should be allocated for channel training given a coherence time and how the pilot sequences should be assigned to users on multiple cells.
We derive optimal pilot assignment strategy, which mitigates the pilot contamination effect and maximizes total bits transmitted in a given $T_{coh}$. 
Our analysis considers using pilot sequences possibly longer than the number of users in each cell, while orthogonality of the pilots within a cell is guaranteed. 

\subsection{Hexagonal-Lattice Based Cell Clustering and Pilot Assignment Rule}


Consider $L$ hexagonal cells. Imagine partitioning these cells into three equi-distance subsets 
maintaining the same lattice structure as depicted in Fig. \ref{Fig:3Lattice}. 
This partitioning is identical to the familiar partitioning of contiguous hexagonal cells for utilizing three frequency bands according to a frequency reuse factor of three.


\begin{figure}
	\centering
	\subfloat[][3-way Partitioning]{\includegraphics[width=25mm]{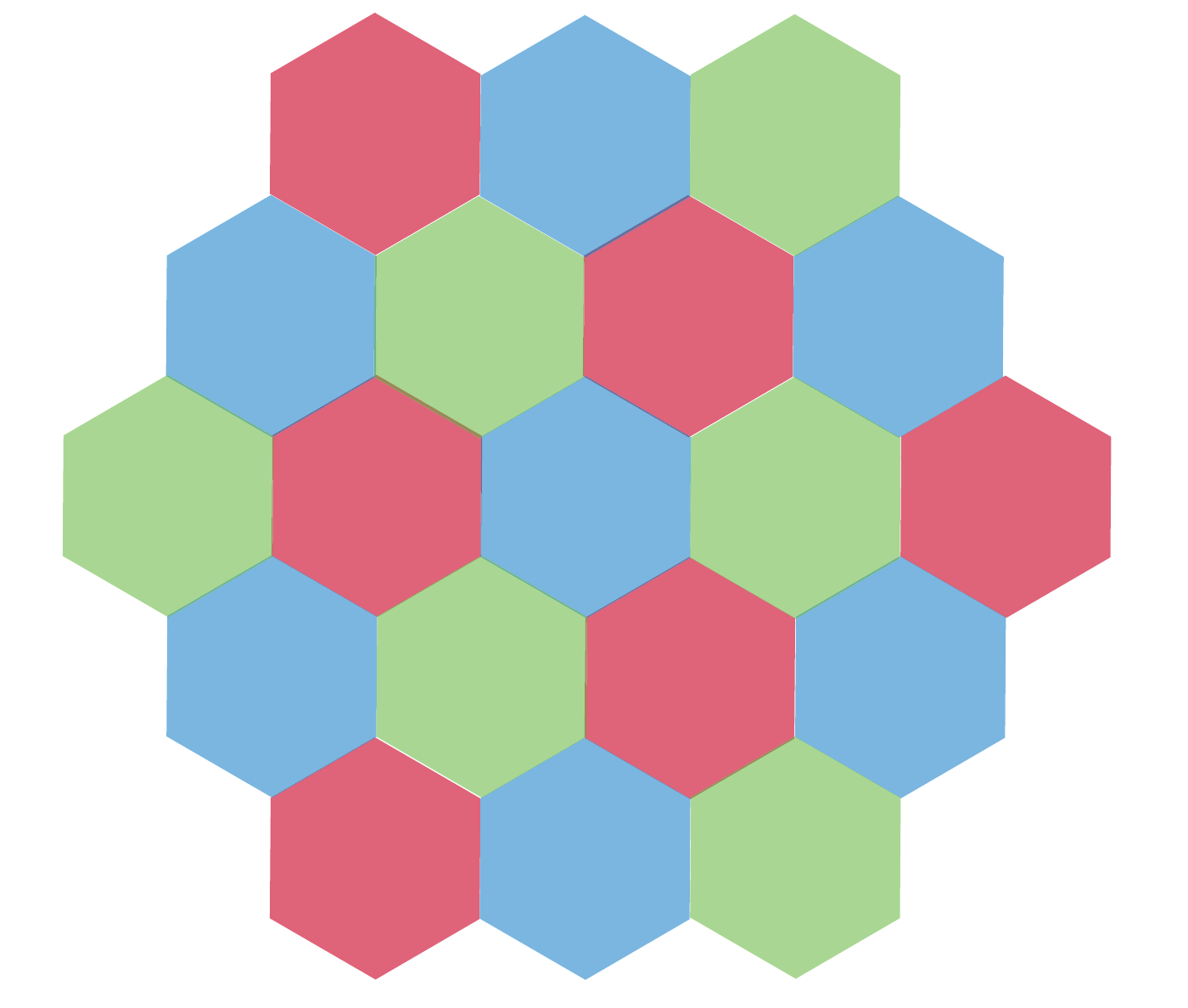}\label{Fig:3Lattice}}
	\quad \quad
	\subfloat[][Hierarchical set partitinoing]{\includegraphics[width=0.45\textwidth]{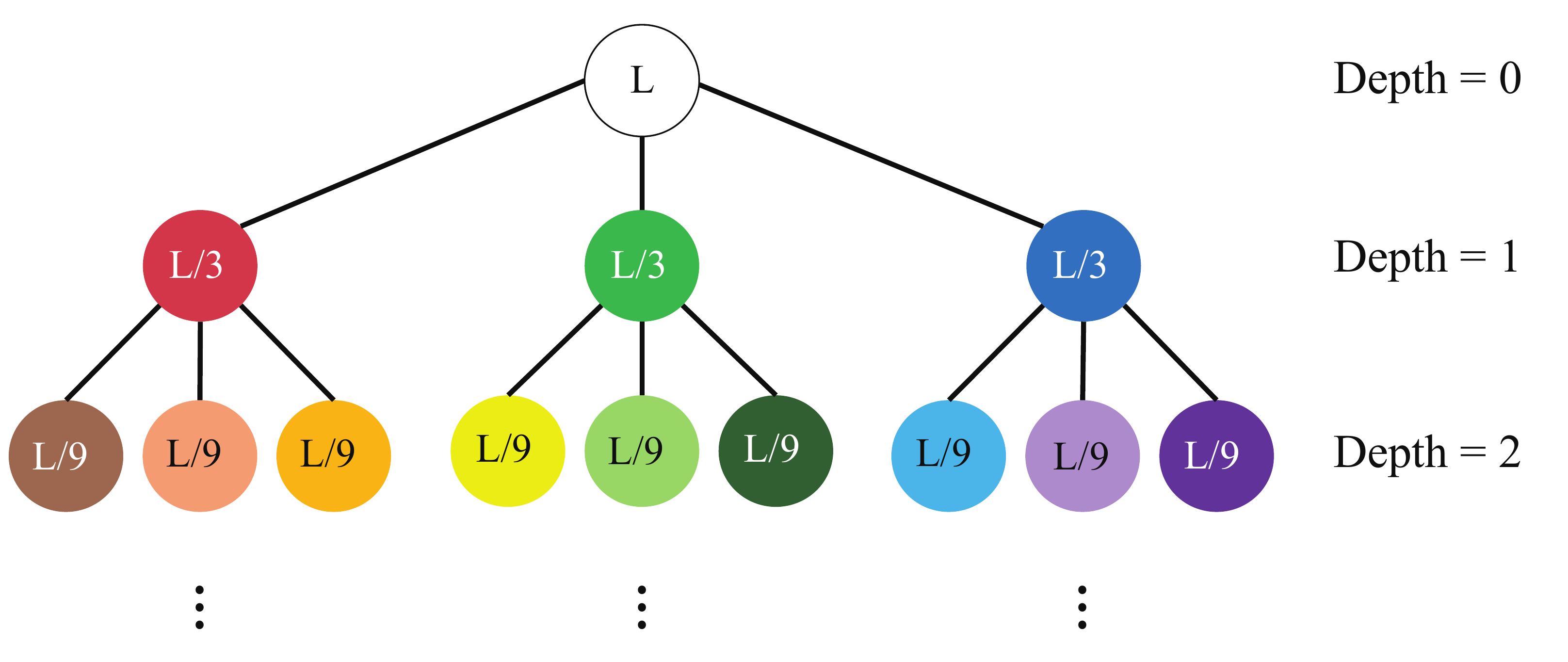}\label{Fig:Hierarchy}}
	\caption{The Cell Partitioning Method}
	\label{steady_state}
\end{figure}

It can easily be seen that each coset, having the same hexagonal lattice structure, can be further partitioned in the similar way.
The partitioning can clearly be applied in a successive fashion, giving rise to the possibility of hierarchical set partitioning of the entire cells.
In the tree structure of Fig. \ref{Fig:Hierarchy}, the root node (at depth $0$) represents the original group of  $L$ contiguous hexagonal cells,
and the three child nodes labeled $L/3$ correspond to the three colored-cosets of Fig. \ref{Fig:3Lattice}. Also, applying a 3-way partitioning to a coset results in additional three child nodes with labels $L/9$. Note that a node at depth $i$ corresponds to a subset of $L3^{-i}$ cells. 

This cell clustering method is used to define pilot assignment rule in multi-cell system. Consider when each cell has a single user (i.e., $K=1$). Then, determining pilot sharing cells specifies the pilot assignment rule. In our approach, cells with same color are defined to share same pilot, while cells with different color have orthogonal pilots. In the corresponding tree structure, the number of leaves with different (non-white) colors represents the number of different orthogonal pilot sets used.
A leaf (end node) with a single cell would correspond to a depth of $\log_{3}L$, but we do not allow such leaves in our analysis as this means
there would be users with no pilot contamination, thus driving the average per-cell throughput of the network to infinity 
as $M$ grows. This particular situation would not lend itself to a meaningful mathematical analysis. Thus, the maximum depth of a leaf in our tree is set to $\log_3 L -1$. 

The pilot assignment rule can be generalized to the case of having multiple users per cell (i.e., $K>1$), by applying the procedure $K$ times consecutively ($K$ users in each cell obtain orthogonal pilots by $K$ independent procedures). Note that $K$ procedures result in $K$ partitioning trees, where each tree represents pilot assigning rule of $L$ users.

Note that the suggested pilot assignment strategy assumes uniform power allocation: every user has the same transmit power for pilot. The suggested assignment focuses on reducing pilot contamination by placing pilot-sharing users in distant cells, and the optimal power allocation has not been considered here. According to the recent research \cite{ngo2016cell} on pilot assignment and power control \textit{within a cell}, controlling transmit power for pilot among different users has a significant role in increasing spectral efficiency. This could be an interesting subject that could be pursued in conjunction with the suggested pilot assignment strategy, but we leave the pilot power control issue as a future research topic.  

\subsection{Pilot Assignment Vector}\label{Section:PilotAssignmentVector}

The pilot assignment method can conveniently be formulated in a vector form. Let $\mathbf{p}$ be a vector with element 
$p_{i}$, $i=0,\cdots,\log_3 L -1$, representing the number of leaves at depth $i$ of the partitioning tree. For example, 
for the tree of Fig. \ref{Fig:coloring example}, we have $\mathbf{p}=(0,2,3,0)$, as there are two leaves at depth 1 and three at depth 2.

\newtheorem{theorem}{Theorem}
\newtheorem{lemma}{Lemma}
\newtheorem{corollary}{Corollary}
\newenvironment{definition}[1][Definition]{\begin{trivlist}
\item[\hskip \labelsep \normalfont #1]}{\end{trivlist}}

\begin{definition}[Definition:\nopunct]
For the given $L$ cells, denote $P_{L,K}$ as the set of valid pilot assignment vectors based on 3-way partitioning, which is given by
\begin{align} \label{pilot assgning vector}
P_{L,K} = \{\mathbf{p}&=(p_0,p_1,\cdots,p_{\log_3 L-1})\: : \nonumber\\
&0 \leq p_i \leq K3^i, \sum\limits_{i=0}^{\log_3 L -1} p_i3^{-i}=K\}
\end{align}
\end{definition}
where it is implied that $\ p_{i}$ is an integer.
\begin{definition}[Definition:\nopunct]
For the given $L$ cells, $N_{pil}(\mathbf{p})$ is the length of the valid pilot assignment vector $\mathbf{p}=(p_0,p_1,\cdots,p_{\log_3 L -1})$ and is given by
$N_{pil}(\mathbf{p}) = \sum\limits_{i=0}^{\log_3 L-1} p_i.$
\end{definition}
The pilot length $N_{pil}(\mathbf{p})$ represents the number of orthogonal pilots corresponding to the given pilot assignment vector $\mathbf{p}$.
In the corresponding tree structure, $N_{pil}(\mathbf{p})$ is the number of leaves.
As an example, for the pilot assignment strategy shown in Fig. \ref{Fig:coloring example}, we have $N_{pil}(\mathbf{p})=5$.

\begin{figure}[!t]
	
	\centering
	\includegraphics[height=30mm]{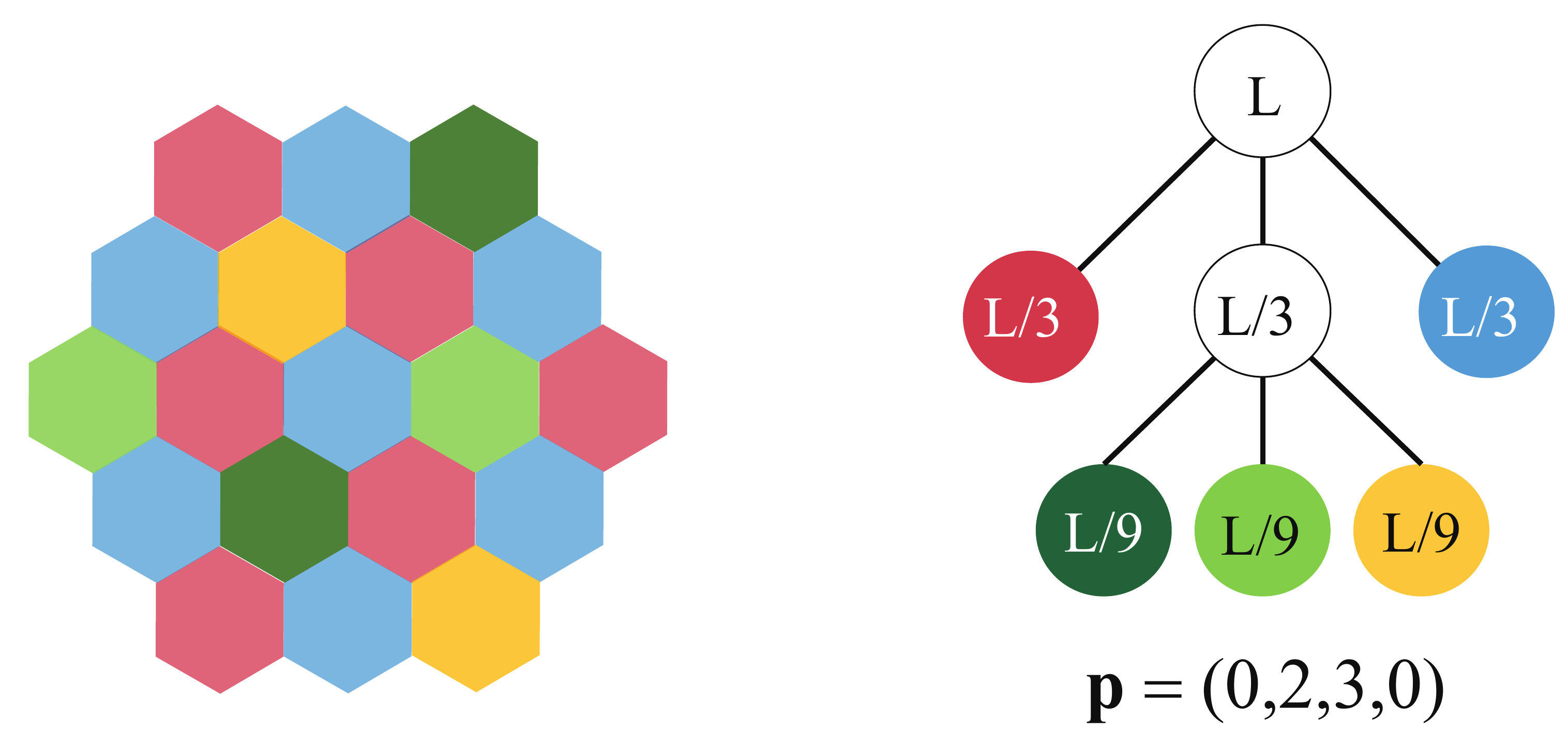}
	\caption{Example of coloring, tree structure, and pilot assignment vector}
	\label{Fig:coloring example}
	
\end{figure}

Notice that the users in different cells belonging to a given leaf experience pilot contamination. The severity of the contamination depends on 
the depth of the leaf. Every time the depth is increased, the distance between interfering cells (cells that reuse the pilot set)
increases by a certain factor, as can be observed from Fig. \ref{Fig:coloring example}. In fact, the distance grows geometrically as the depth increases.
According to (\ref{achievableR}), the achievable rate is determined by the $\beta$ values of the interfering users, which in turn depend on 
the distances of the interferers from the BS. 
The distance growth manifests itself as the reduced pilot contamination effect or an improved SIR, increasing the throughput. 
More specifically, the throughput grows roughly with $\log_2(\alpha^{2\gamma} SIR_1)=2\gamma \log_2(\alpha)+\log_2 SIR_1$, where 
$SIR_1$ is the reference SIR corresponding to the pilot reuse factor 1 and $\alpha$ is the parameter that represents the distance growth. It is clear that a geometric growth of 
$\alpha$ gives rise to a linear increase in the throughput. 
Letting $C_{i}$ be the rate of a user at depth $i$, this is to say that $C_{i}$ increases linearly with depth $i$.
With $\gamma = 3.7$ and $\alpha = \sqrt{3}$ (based on 3-way partitioning), we have $C_{i+1} \simeq C_{i} + 6$.

The per-cell sum rate for the network can be expressed as 
\begin{equation} \label{sum rate}
C_{sum}(\mathbf{p})=\frac{1}{L}\sum\limits_{i=0}^{\log_3 L-1} L3^{-i} p_i  C_{i}=\sum\limits_{i=0}^{\log_3 L-1} 3^{-i} p_i  C_{i}.
\end{equation}
The per-cell net sum rate, accounting for the fact that useful data gets transmitted only over the portion of the coherence time not allocated to the pilots,
is given by 
\begin{align} \label{net sum rate}
C_{net}(\mathbf{p}) =\frac{N_{coh}-N_{pil}(\mathbf{p})}{N_{coh}} C_{sum}(\mathbf{p}).
\end{align}
We shall use $C_{net}$ as the objective function for 
finding optimal pilot assignment strategies.

\subsection{Closed-Form Solution of Optimal Pilot Assignment Strategy}\label{MathResults_InfiniteM}
The optimal pilot assignment vector $\mathbf{p}_{opt}$ for the given normalized coherence time $N_{coh}$, the number of cells $L$, and the number of users per cell $K$ is:
\begin{equation*}\label{optimal assigning}
\mathbf{p}_{opt} = \underset{\mathbf{p}\in P_{L,K}}{\arg\max}\ C_{net}(\mathbf{p}).
\end{equation*}
We also note that all valid pilot assignments yield pilot lengths of same parity with $K$, as formally stated in Lemma \ref{Lemma_pilot_length} (with proof given in Appendix A).

\begin{lemma} \label{Lemma_pilot_length}
For given $L$ and $K$, $\{N_{pil}(\mathbf{p}) : \mathbf{p} \in P_{L,K}\} = \{K,K+2,K+4,\cdots, \frac{LK}{3} \}$.
\end{lemma}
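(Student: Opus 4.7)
The plan is to prove the lemma via two separate inclusions. Writing $d = \log_3 L - 1$ for the maximum allowed depth, one direction shows that $N_{pil}(\mathbf{p})$ for any $\mathbf{p} \in P_{L,K}$ is an integer in $\{K, K+2, \ldots, KL/3\}$; the other exhibits, for each such integer, an explicit valid $\mathbf{p}$ realizing it.

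For the first direction, I would extract both the range and the parity from the single sum constraint $\sum_i p_i 3^{-i} = K$. The lower bound is immediate from $\sum_i p_i \geq \sum_i p_i 3^{-i} = K$ since each $3^{-i} \leq 1$, and the upper bound from $\sum_i p_i \leq \sum_i p_i 3^{d-i} = 3^d K = KL/3$ since each $3^{d-i} \geq 1$. For parity, multiplying the sum constraint by $3^d$ yields $\sum_i p_i\, 3^{d-i} = KL/3$; since every $3^{d-i}$ is odd, reducing modulo $2$ gives $N_{pil}(\mathbf{p}) \equiv KL/3 \pmod 2$, and because $L = 3^{d+1}$ is a power of $3$ the factor $L/3 = 3^d$ is odd, so $KL/3 \equiv K \pmod 2$.

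For the second direction, I would proceed constructively. Starting from $\mathbf{p}^{(0)} = (K, 0, \ldots, 0)$, which is valid with $N_{pil} = K$, I define a promotion step: pick any index $i < d$ with $p_i > 0$ and replace $(p_i, p_{i+1})$ by $(p_i - 1, p_{i+1} + 3)$. This preserves $\sum_j p_j 3^{-j} = K$ and raises $N_{pil}$ by exactly $2$. The only nontrivial feasibility check is $p_{i+1} + 3 \leq K \cdot 3^{i+1}$, but whenever $p_i \geq 1$ the sum constraint forces $p_{i+1} 3^{-(i+1)} \leq K - p_i 3^{-i} \leq K - 3^{-i}$, giving $p_{i+1} \leq K \cdot 3^{i+1} - 3$ exactly as needed. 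Iteration halts only when all mass is concentrated at depth $d$, i.e., at $(0, \ldots, 0, KL/3)$ with $N_{pil} = KL/3$; every target value $K, K+2, \ldots, KL/3$ thus appears along the sequence.

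The main subtlety I foresee is the parity step, which tacitly requires $L/3 = 3^d$ to be odd; this is built into the $3$-way hierarchical partitioning assumption that $L$ is a power of $3$, but is worth stating explicitly. Everything else is routine: the range bounds collapse to one-line comparisons, and the iterative construction never needs to invoke the explicit upper-bound constraint $p_i \leq K \cdot 3^i$ in a nontrivial way, since it is already implied by the sum constraint.
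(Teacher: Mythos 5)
Your proposal is correct, and it is worth comparing with the paper's own argument, which is shorter but less complete. The paper proves Lemma \ref{Lemma_pilot_length} entirely through the tree picture: starting from full reuse with $K$ leaves, each $3$-way partitioning act replaces one leaf by three and hence increases the leaf count by exactly $2$, and the process terminates when all $K$ trees reach depth $\log_3 L - 1$ with $LK/3$ leaves in total. That argument is essentially your second (constructive) direction --- your ``promotion step'' $(p_i, p_{i+1}) \mapsto (p_i - 1, p_{i+1}+3)$ is precisely one partitioning act at depth $i$ written in vector form, and your feasibility check $p_{i+1}+3 \leq K\cdot 3^{i+1}$ is a detail the paper does not bother to verify. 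Where you genuinely diverge is the converse inclusion: the paper never explicitly shows that \emph{every} $\mathbf{p} \in P_{L,K}$, as defined by the constraints in (\ref{pilot assgning vector}), has $N_{pil}(\mathbf{p})$ of the form $K+2m$ with $K \leq N_{pil}(\mathbf{p}) \leq LK/3$; it implicitly identifies $P_{L,K}$ with the set of vectors reachable by successive partitionings. Your algebraic derivation --- the bounds from $3^{-i} \leq 1 \leq 3^{d-i}$ and the parity from reducing $\sum_i p_i 3^{d-i} = K\cdot 3^{d}$ modulo $2$ using the oddness of powers of $3$ --- closes that gap directly from the definition and is self-contained, at the cost of being slightly longer. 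You are also right to flag that the oddness of $L/3 = 3^{d}$ is the load-bearing fact in the parity step, and your observation that the constraint $p_i \leq K3^i$ is already implied by nonnegativity plus the sum constraint is accurate (it is the same computation the paper uses in its proof of Lemma \ref{Lemma_transition_vector}).
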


Before giving the first main theorem, it is useful to define the pilot assignment vector that maximizes the per-cell sum rate $C_{sum}$ with a finite pilot length constraint:
\begin{equation*}\label{optimal_prime}
\mathbf{p}'_{opt}(N_{p0})=\underset{\mathbf{p}\in \Omega(N_{p0})}{\arg\max} \ C_{sum}(\mathbf{p})
\end{equation*}
where 
$\Omega(N_{p0})=\{\mathbf{p}\in P_{L,K} \: | \: N_{pil}(\mathbf{p})=N_{p0}\}.$

Also, it is useful to define a transition vector associated with each valid pilot assignment vector. Given a pilot assignment vector $\mathbf{p}$, a transition vector $\mathbf{t}$ is a vector whose $i$-th element $t_i$ represents the number of (3-way) partitioning acts taken place at depth $i$ as the full reuse assignment vector of
$(1,0,\cdots,0)$ transitions to $\mathbf{p}$. For example, $\mathbf{p}=(1,0,0,0)$ turns to $\mathbf{p}=(0,2,3,0)$ via a transition vector $\mathbf{t}=(1,1,0)$. The first transition element $t_0=1$ triggers a (3-way) partitioning at depth 0, temporarily creating a pilot vector (0,3,0,0). The next element $t_1=1$ induces a (3-way) partitioning on one of the 3 existing partitions at depth 1, thereby giving rise to a new pilot vector (0,2,3,0). Since the next transition vector element is zero, the partitioning stops. The transition elements also point to the number of white nodes at each depth, as can be confirmed in Fig. \ref{Fig:coloring example}. A general definition which relates the pilot assignment vector $\mathbf{p}$ and the corresponding transition vector $\mathbf{t}$ is given as follows.

\begin{definition}[Definition:\nopunct]
For each valid pilot assignment vector $\mathbf{p}=(p_{0},p_{1},\cdots,p_{\log_3 L-1}) \in P_{L,K}$, the corresponding transition vector $\mathbf{t}=(t_{0},t_{1},\cdots,t_{\log_3 L-2})$ is defined as:
\begin{equation}\label{ptot}
\begin{cases}
t_{0} = K - p_{0} \\
t_{i} = -p_{i} + 3t_{i-1}. & 1\leq i \leq  \log_3 L-2 \\
\end{cases}
\end{equation}
The inverse relationship exists:
\begin{equation}\label{ttop}
\begin{cases}
p_{0} = K - t_{0} \\
p_{i} = 3t_{i-1}-t_{i} & 1\leq i \leq  \log_3 L-2 \\
p_{\log_3 L-1}=3t_{\log_3 L-2}.
\end{cases}
\end{equation}
\end{definition}
The first two equations of (\ref{ttop}) come from (\ref{ptot}), while the last equation is from the fact that $\sum_{i=0}^{\log_3 L-1}p_{i}3^{-i}=K$ as in (\ref{pilot assgning vector}). A useful Lemma on the property of the transition vector is stated as follows (with proof given in  Appendix A).

\begin{lemma} \label{Lemma_transition_vector}
Any transition vector $\mathbf{t}=(t_{0},t_{1},\cdots,t_{\log_3 L-2})$ originated from $\mathbf{p} \in \Omega(N_{p0})$ satisfies
\begin{equation} \label{Lemma_transition_vector result}
\begin{cases}
0 \leq t_{i} \leq K3^{i} & 0 \leq i \leq \log_3 L-2 \\
\displaystyle\sum_{i=0}^{\log_3 L-2}t_{i}=\frac{N_{p0}-K}{2}. 
\end{cases}
\end{equation}
\end{lemma}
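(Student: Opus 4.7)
The plan is to first solve the recursion in (\ref{ptot}) in closed form, and then read off both claims directly. Unrolling $t_{i} = 3t_{i-1} - p_{i}$ with $t_{0} = K - p_{0}$ gives, by a straightforward induction,
\begin{equation*}
t_{i} \;=\; K\,3^{i} - \sum_{j=0}^{i} p_{j}\, 3^{i-j} \;=\; 3^{i}\!\left(K - \sum_{j=0}^{i} p_{j}\, 3^{-j}\right).
\end{equation*}
Using the constraint $\sum_{j=0}^{\log_{3}L-1} p_{j}\, 3^{-j}=K$ from (\ref{pilot assgning vector}), this rewrites as the ``tail'' form
\begin{equation*}
t_{i} \;=\; \sum_{j=i+1}^{\log_{3}L-1} p_{j}\, 3^{\,i-j},
\end{equation*}
which is the key identity driving everything else.

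From the tail form, the nonnegativity $t_{i}\geq 0$ is immediate because every $p_{j}\geq 0$. The upper bound $t_{i}\leq K\,3^{i}$ is just as direct: $K\,3^{i} - t_{i} = \sum_{j=0}^{i} p_{j}\, 3^{i-j}\geq 0$. This disposes of the first line of (\ref{Lemma_transition_vector result}).

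For the sum identity, I would substitute the tail form and interchange the order of summation, letting the index $j$ run over $1,\ldots,\log_{3}L-1$ and, for each $j$, letting $i$ run over $0,\ldots,j-1$. The inner geometric sum evaluates to $\sum_{i=0}^{j-1} 3^{\,i-j} = (1-3^{-j})/2$, so
\begin{equation*}
\sum_{i=0}^{\log_{3}L-2} t_{i} \;=\; \tfrac{1}{2}\sum_{j=1}^{\log_{3}L-1} p_{j} \;-\; \tfrac{1}{2}\sum_{j=1}^{\log_{3}L-1} p_{j}\, 3^{-j}.
\end{equation*}
Now apply the two global constraints: $\sum_{j=0}^{\log_{3}L-1} p_{j}=N_{pil}(\mathbf{p})=N_{p0}$ (from the definition of $N_{pil}$ combined with $\mathbf{p}\in \Omega(N_{p0})$) and $\sum_{j=0}^{\log_{3}L-1} p_{j}\, 3^{-j}=K$. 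Peeling off the $j=0$ term from each sum gives $\tfrac{1}{2}(N_{p0}-p_{0}) - \tfrac{1}{2}(K-p_{0}) = (N_{p0}-K)/2$, as desired.

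The only nontrivial step is spotting the closed-form ``tail'' expression for $t_{i}$; everything else is routine bookkeeping with geometric sums and index swaps. The main potential pitfall is being careful with the summation limits when invoking Lemma~\ref{Lemma_pilot_length} and the definitions of $P_{L,K}$ and $N_{pil}$; once the closed form is in hand, the bounds in (\ref{Lemma_transition_vector result}) drop out without any further appeal to the hexagonal partitioning structure.
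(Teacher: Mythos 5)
Your proposal is correct and follows essentially the same route as the paper's Appendix proof: unrolling the recursion to the closed form $t_{i}=3^{i}\bigl(K-\sum_{j=0}^{i}p_{j}3^{-j}\bigr)$, rewriting it as the tail sum via $\sum_{j}p_{j}3^{-j}=K$, and reading off both bounds exactly as the paper does. The only (cosmetic) difference is in the sum identity: you interchange the order of summation and evaluate a geometric series, whereas the paper obtains the same result by deriving the self-referential relation $\sum_{i}t_{i}=K-N_{p0}+3\sum_{i}t_{i}$ directly from the recursion and solving for $\sum_{i}t_{i}$; both are valid one-line computations.
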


We further define the index function  $\chi(N_{p0})$ that identifies the first non-zero position of $\mathbf{p}'_{opt}(N_{p0})$. Note that for a given $N_{p0}$, $(N_{p0}-K)/2$ is the total number of partitioning acts to get from 
$(1,0,\cdots, 0)$ to an arbitrary $\mathbf{p} \in \Omega(N_{p0})$, as formally stated in the second equation of (\ref{Lemma_transition_vector result}). 
Since the maximum value of $t_i$ is $K3^{i}$,
and as the partitioning steps are applied from the top down for a given $N_{p0}$,
all nodes through depth $k-1$ will have been be partitioned 
if 
$\sum_{i=0}^{k-1}K3^{i} \leq (N_{p0}-K)/2$. Continuing to the next depth, 
however, only a portion of the nodes will have been partitioned at depth $k$,
in which case $\sum_{i=0}^{k}K3^{i} > (N_{p0}-K)/2$.
Recall the nodes that have not been partitioned are leaves,
and $\chi(N_{p0})$ is the shortest depth of any leaf.
With the leaf appearing first at depth $k$, 
we can formally write:
$\chi(N_{p0})=\min\{k \: | \: \sum_{i=0}^{k}K3^{i}>\frac{N_{p0}-K}{2} \}.$

We will first lay out a closed-form solution for $\mathbf{p}'_{opt}$ in Theorem \ref{Theorem1} and then 
find eventually $\mathbf{p}_{opt}$ in Theorem \ref{Theorem2} by exploring its relationship with the former. The proof of Theorem 1 is given in the next subsection.

\begin{theorem} \label{Theorem1}
Given $L$, $K$, and $N_{p0}$, the optimal pilot assignment vector \\ $\mathbf{p}'_{opt}(N_{p0})=(p'_{0},\cdots,p'_{\log_3 L -1})$ with respect to $C_{sum}$, has its components as follows:
\begin{equation} \label{Thm1 result}
p'_{i} =
\begin{cases}
\displaystyle\sum_{s=0}^{i}K3^{s}-\frac{N_{p0}-K}{2} & i=\chi(N_{p0}) \\
3\left(\displaystyle\frac{N_{p0}-K}{2}-\displaystyle\sum_{s=0}^{i-2}K3^{s}  \right) & i=\chi(N_{p0})+1 \\
0 & \textrm{otherwise}
\end{cases}
\end{equation}
\end{theorem}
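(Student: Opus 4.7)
My plan is to switch to the transition-vector representation in \eqref{ttop} and exploit the fact, asserted in the discussion preceding \eqref{sum rate}, that $C_{i+1}-C_{i}$ equals a positive constant $\Delta$. Under this reparameterization the objective collapses into a weighted sum of the $t_{i}$'s with monotonically decreasing weights $3^{-i}\Delta$, and Lemma~\ref{Lemma_transition_vector} provides a simple box-plus-sum constraint set. The problem then reduces to the greedy principle ``spend the partitioning budget at the smallest available depth first,'' which is exactly what the index $\chi(N_{p0})$ records.

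First I would substitute \eqref{ttop} into \eqref{sum rate} and regroup. Treating the three cases $i=0$, $1\le i\le \log_{3}L-2$, and $i=\log_{3}L-1$ separately, shifting one index in the resulting $3t_{i-1}$ contributions, and telescoping yields
\[
C_{sum}(\mathbf{p})\;=\;K\,C_{0}\;+\;\sum_{i=0}^{\log_{3}L-2}3^{-i}\bigl(C_{i+1}-C_{i}\bigr)\,t_{i}\;=\;K\,C_{0}\;+\;\Delta\sum_{i=0}^{\log_{3}L-2}3^{-i}t_{i}.
\]
Since the first term does not depend on $\mathbf{p}$, the optimization over $\Omega(N_{p0})$ becomes: maximize $\Phi(\mathbf{t}):=\sum_{i}3^{-i}t_{i}$ over the transition vectors permitted by Lemma~\ref{Lemma_transition_vector}.

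Next I would solve this linear program by a partial-sum argument. Setting $S_{j}=\sum_{i\le j}t_{i}$, a second Abel summation gives $\Phi(\mathbf{t})=3^{-(\log_{3}L-2)}S_{\log_{3}L-2}+\tfrac{2}{3}\sum_{j}3^{-j}S_{j}$ with $S_{\log_{3}L-2}=(N_{p0}-K)/2$ fixed by Lemma~\ref{Lemma_transition_vector}. Because $t_{i}\le K 3^{i}$ implies $S_{j}\le K(3^{j+1}-1)/2$, and this bound is increasing in $j$, I can simultaneously maximize every $S_{j}$ by saturating it greedily for $j=0,1,\ldots$ until the budget $(N_{p0}-K)/2$ is exhausted. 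The threshold index at which this happens is precisely $\chi(N_{p0})$, yielding $t_{i}=K 3^{i}$ for $i<\chi(N_{p0})$, $t_{\chi(N_{p0})}=(N_{p0}-K)/2-\sum_{s<\chi(N_{p0})}K 3^{s}$, and $t_{i}=0$ otherwise. Feeding this back through \eqref{ttop} reproduces \eqref{Thm1 result}: entries $p'_{i}$ with $i<\chi(N_{p0})$ or $i>\chi(N_{p0})+1$ vanish because $t_{i}=3t_{i-1}$ (or both are zero), while the residual mass splits cleanly between depths $\chi(N_{p0})$ and $\chi(N_{p0})+1$.

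The main obstacle I anticipate is verifying that the greedy transition vector $\mathbf{t}^{\star}$ indeed corresponds to a valid $\mathbf{p}\in\Omega(N_{p0})$. Lemma~\ref{Lemma_transition_vector} is stated only as a necessary condition, whereas the converse direction requires the hidden inequality $t_{i}\le 3t_{i-1}$ (equivalently $p_{i}\ge 0$ via \eqref{ttop}); for the greedy choice this holds with equality for $0<i<\chi(N_{p0})$, by the definition of $\chi$ at $i=\chi(N_{p0})$, and trivially for $i>\chi(N_{p0})$. Integrality of the remainder term follows because $(N_{p0}-K)/2$ is an integer by Lemma~\ref{Lemma_pilot_length} and $K(3^{k}-1)/2$ is an integer for any $k$. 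A secondary concern is the bookkeeping at the boundary index $\log_{3}L-1$ in the telescoping step (the tail case $p_{\log_{3}L-1}=3t_{\log_{3}L-2}$ of \eqref{ttop}), which must be merged with the interior sum but does not alter the final expression.
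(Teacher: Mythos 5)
Your proposal is correct, and it coincides with the paper's proof through the key reduction: both pass to the transition vector via (\ref{ptot})--(\ref{ttop}), use the linearity of $C_{i}$ in $i$ to collapse the objective to $KC_{0}+\Delta\sum_{i}3^{-i}t_{i}$, and identify the same greedy optimizer $\mathbf{t}^{\star}$ (saturate $t_{i}=K3^{i}$ from the top until the budget $(N_{p0}-K)/2$ runs out at depth $\chi(N_{p0})$). Where you diverge is in how optimality of $\mathbf{t}^{\star}$ is certified. The paper compares $\mathbf{t}^{\star}$ directly against an arbitrary competitor $\mathbf{t}$ and lower-bounds the difference $\delta$ in two cases (according to whether $t^{\star}_{\chi}\ge t_{\chi}$ or not), each time grouping the indices below and above $\chi$, replacing the weights $3^{-i}$ by their extremal values on each group, and invoking the fixed total $\sum_{i}t_{i}$ from Lemma~\ref{Lemma_transition_vector}. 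You instead apply Abel summation to rewrite the objective as a positive combination of the partial sums $S_{j}$, observe that each $S_{j}$ is separately capped by $\min\bigl(\sum_{i\le j}K3^{i},\,(N_{p0}-K)/2\bigr)$, and note that the greedy vector attains every cap simultaneously. The two certificates are equally elementary and equally rigorous; yours makes the ``why greedy works'' structure (decreasing weights over a box-plus-budget polytope) more transparent. You are also right that feasibility of $\mathbf{t}^{\star}$ --- i.e., that it maps back through (\ref{ttop}) to a nonnegative integer vector in $\Omega(N_{p0})$, which requires the constraint $t_{i}\le 3t_{i-1}$ not recorded in Lemma~\ref{Lemma_transition_vector} --- must be checked; the paper sidesteps this only because it starts from the explicit vector (\ref{Thm1 result}) and reads off its transition vector, whereas your construction goes in the opposite direction, so your verification is necessary and correctly carried out. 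No gaps.
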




For example, given $L=81$, $K=1$, and $N_{p0}=7$, $\chi(N_{p0})=1$ since $3^0  < (N_{p0}-K)/2 = 3 < 3^0 + 3^1 $. Therefore, 
$\mathbf{p}'_{opt}(7)$ has its components $p'_{0}=p'_{3}=0$, $p'_{1}=\sum_{s=0}^{1}3^{s}-(7-1)/2 = 1$, and $p'_{2}=3\{(7-1)/2 - \sum_{s=0}^{0}3^{s}\} = 6$, which result in $\mathbf{p}'_{opt}(7) = (0,1,6,0)$.

For a given $N_{p0}$, more than one valid pilot assignments may exist. For example, if $L=81$, $K=1$, and $N_{p0}=7$, $\mathbf{p}=(0,1,6,0)$ and $\mathbf{p}=(0,2,2,3)$ are valid vectors, but Theorem \ref{Theorem1} reveals that $\mathbf{p}'_{opt}(N_{p0})=(0,1,6,0)$. From Theorem \ref{Theorem1}, a certain trend relating the optimal pilot assignment vectors $\mathbf{p}'_{opt}(N_{p0})$ and  $\mathbf{p}'_{opt}(N_{p0}+2)$ can be observed as follows.

\begin{corollary} \label{Corollary1}
For two pilot lengths $N_{p0}$ and $N_{p0}+2$, the two corresponding optimal pilot assignment vectors $\mathbf{p}'_{opt}(N_{p0})=(p^{*}_{0},\cdots,p^{*}_{\log_3 L -1})$ and $\mathbf{p}'_{opt}(N_{p0}+2)=(p^{**}_{0},\cdots,p^{**}_{\log_3 L -1})$ exhibit the following relationship:
\begin{equation*}
p^{**}_{i} = 
\begin{cases}
p^{*}_{i}-1 & i=\chi(N_{p0}) \\
p^{*}_{i}+3 & i=\chi(N_{p0})+1 \\
p^{*}_{i} & \textrm{otherwise}
\end{cases}
\end{equation*}
\end{corollary}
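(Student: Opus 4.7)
The plan is to apply Theorem \ref{Theorem1} directly to both $\mathbf{p}'_{opt}(N_{p0})$ and $\mathbf{p}'_{opt}(N_{p0}+2)$ and to compare the resulting expressions coordinate by coordinate. A preliminary step is to control how the index function $\chi$ behaves under the shift $N_{p0} \mapsto N_{p0}+2$. Since the threshold $(N_{p0}-K)/2$ in the definition of $\chi$ increases by exactly $1$, and the partial sums $\sum_{s=0}^{k} K3^{s}$ are strictly increasing in $k$, the minimality condition defining $\chi$ immediately forces $\chi(N_{p0}+2) \in \{\chi(N_{p0}),\, \chi(N_{p0})+1\}$, with the second alternative occurring exactly in the boundary situation $\sum_{s=0}^{\chi(N_{p0})} K3^{s} = (N_{p0}-K)/2 + 1$. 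I would split the argument along these two cases, writing $k := \chi(N_{p0})$ throughout.

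In the generic case $\chi(N_{p0}+2) = k$, the expressions of Theorem \ref{Theorem1} for $p^{*}_{k}, p^{*}_{k+1}$ and $p^{**}_{k}, p^{**}_{k+1}$ are affine in the threshold $(N_{p0}-K)/2$. Shifting that threshold by $+1$ therefore decreases $p^{*}_{k}$ by $1$ and increases $p^{*}_{k+1}$ by $3$, while all other coordinates are $0$ in both vectors by the ``otherwise'' clause of Theorem \ref{Theorem1}. This reproduces the claim verbatim, with essentially no work beyond substituting into the formulas.

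The main obstacle is the boundary case $\chi(N_{p0}+2) = k+1$, where the support of the optimal vector shifts one level deeper and the corollary must still read as a pointwise $(-1,+3,0,\ldots)$ adjustment at positions $(k,k+1,\ldots)$. Here I would exploit the boundary identity above to evaluate explicitly: it forces $p^{*}_{k} = 1$, so the prescribed $-1$ shift correctly zeroes this coordinate; a short computation using $(N_{p0}-K)/2 = \sum_{s=0}^{k} K3^{s}-1$ gives $p^{*}_{k+1} = K3^{k+1} - 3$, while Theorem \ref{Theorem1} applied at the new index $k+1 = \chi(N_{p0}+2)$ yields $p^{**}_{k+1} = K3^{k+1}$, matching the $+3$ shift. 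Finally, the new ``adjacent'' coordinate at depth $k+2 = \chi(N_{p0}+2)+1$ evaluates to $p^{**}_{k+2} = 3\bigl[(N_{p0}+2-K)/2 - \sum_{s=0}^{k}K3^{s}\bigr] = 0$, which agrees with $p^{*}_{k+2} = 0$ and keeps the ``otherwise'' clause of the corollary intact.

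Overall, the proof is a case-split driven entirely by the closed form in Theorem \ref{Theorem1}; the only real subtlety is the bookkeeping in the boundary case, where one must verify that the coordinate removed from the support is exactly $1$ and the coordinate newly introduced at depth $k+2$ is exactly $0$, so that the support relocation is consistent with the uniform shift pattern stated in the corollary.
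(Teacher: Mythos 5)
Your proposal is correct and follows essentially the same route as the paper: a case split on whether $\chi(N_{p0}+2)$ equals $\chi(N_{p0})$ or $\chi(N_{p0})+1$, with the generic case being a direct substitution into Theorem \ref{Theorem1} and the boundary case handled via the identity $\sum_{i=0}^{\chi(N_{p0})}K3^{i}=\frac{N_{p0}-K}{2}+1$. You simply carry out the coordinate-by-coordinate bookkeeping (showing $p^{*}_{\chi(N_{p0})}=1$ and $p^{**}_{\chi(N_{p0})+2}=0$) more explicitly than the paper does.
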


\begin{proof}[Proof:\nopunct]
In the case  $\chi(N_{p0})=\chi(N_{p0}+2)$, Corollary \ref{Corollary1} is a direct consequence of  Theorem \ref{Theorem1}.
In other case, i.e., $\chi(N_{p0}+2)=\chi(N_{p0})+1$, we can see $\sum_{i=0}^{\chi(N_{p0})}K3^{i}=\frac{N_{p0}-K}{2}+1$ from the definition of the $\chi$ function. This coupled with Theorem \ref{Theorem1} proves Corollary \ref{Corollary1}.
\end{proof}

For a given $N_{p0}$, the optimal assignment vectors $\mathbf{p}'_{opt}(N_{p0})$ and $\mathbf{p}'_{opt}(N_{p0}+2)$ show a predictable pattern of tossing 1 from the left-most non-zero component to increase the adjacent component by 3. For example, in the case of $L=81$ and $K=1$, the optimal assignment $\mathbf{p}'_{opt}(7)=(0,1,6,0)$ can be transformed by reducing the second component by 1 and increasing the third one by 3, which results in the next optimal assignment for $N_{p0}=9$: $\mathbf{p}'_{opt}(9)=(0,0,9,0)$. It can be seen that there is a tendency to reduce the left-most non-zero values which give the most severe pilot contamination.

We now set out to find $\mathbf{p}_{opt}$.
Theorem \ref{Theorem1} and Corollary $\ref{Corollary1}$ already identify, given the fairly mild constraints of 
hexagonal cells and equi-distance partitioning, the optimal pilot assignment strategy maximizing the sum rate
for the chosen pilot sequence length. The next step is to find the relationship between the normalized coherence time $N_{coh}$ and the optimal pilot sequence length.

First, write the net sum-rate as 
$C_{net}(\mathbf{p}'_{opt}(N_{p0})) =\frac{N_{coh}-N_{p0}}{N_{coh}}  C_{sum}(\mathbf{p}'_{opt}(N_{p0})),$
which is an increasing function of $N_{coh}$ and crosses the horizontal axis once at $N_{coh}=N_{p0}$. Moreover this function saturates to $C_{sum}(\mathbf{p}'_{opt}(N_{p0}))$ for very large $N_{coh}$. Imagine plotting this function 
for all possible values of $N_{p0}=K,K+2,K+4,\cdots, LK/3$. As $N_{p0}$ increases, the zero-crossing is naturally shifted to the right while the saturation value moves up. More specifically, the $C_{net}$ curve for $N_{p0}=2(n-1)+K$ and 
that for $N_{p0}=2n+K$ intersect once. On the left side of this intersection point, the $C_{net}$ curve for $N_{p0}=2(n-1)+K$ is above the latter while on the right side, the latter curve is higher than the former. Let the horizontal value of this intersection point be $N_{coh}=\Delta_{n}$. 
It can be shown that (proof given in the next subsection) the intersection points are given by
\begin{equation}\label{delta_n}
\Delta_{n} = 
2\left( 2n-1-\displaystyle \sum_{i=0}^{\eta(n)-1} K3^{i}+K\xi(n) \right)+K
\end{equation} 
for $1\leq n \leq N_{L,K}$,
where $\eta(n) = \chi(2n+K-2)$ and $\xi(n)=3^{\eta(n)} C_{\eta(n)}/(C_{\eta(n)+1}-C_{\eta(n)})$, with $C_{i}$ already defined earlier in this section, and $N_{L,K}$ is the number of all possible pilot lengths minus one. Since $N_{pil}=K,K+2,\cdots,LK/3$ from Lemma \ref{Lemma_pilot_length}, we have $N_{L,K}=\frac{1}{2}(\frac{LK}{3} - K)$. We now state our second main theorem.

\begin{theorem} \label{Theorem2}
For given $L$, $K$, and $N_{coh}$, if $N_{coh}$ is in between two adjacent time points $\Delta_{n}$ and $\Delta_{n+1}$, i.e., $\Delta_{n}\leq N_{coh} <\Delta_{n+1}$, then the optimal assignment vector $\mathbf{p}_{opt}$ satisfies
\begin{gather*}
N_{pil}\left(\mathbf{p}_{opt}\right)=2n+K \nonumber\\
\mathbf{p}_{opt}=\mathbf{p}'_{opt}(2n+K). \label{Thm2 result}
\end{gather*}
Also, if $N_{coh} \geq \Delta_{N_{L,K}}$, then $\mathbf{p}_{opt}=(0,\cdots,0,LK/3)$.
\end{theorem}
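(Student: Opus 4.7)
\medskip

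\noindent\textbf{Plan.} The strategy is to peel the optimization apart in two layers: first fix the pilot length $N_{p0}$ and optimize the shape of $\mathbf{p}$, then optimize over $N_{p0}$. Because
\[
C_{net}(\mathbf{p})=\Bigl(1-\tfrac{N_{pil}(\mathbf{p})}{N_{coh}}\Bigr)C_{sum}(\mathbf{p}),
\]
the leading factor depends on $\mathbf{p}$ only through $N_{pil}(\mathbf{p})$. Thus at any fixed length $N_{p0}$, Theorem~\ref{Theorem1} already supplies the inner maximizer $\mathbf{p}'_{opt}(N_{p0})$, and by Lemma~\ref{Lemma_pilot_length} the joint problem collapses to maximizing the scalar
\[
F_n:=\Bigl(1-\tfrac{2n+K}{N_{coh}}\Bigr)g_n,\qquad g_n:=C_{sum}\bigl(\mathbf{p}'_{opt}(2n+K)\bigr),
\]
over $n\in\{0,1,\ldots,N_{L,K}\}$, and the final answer is $\mathbf{p}_{opt}=\mathbf{p}'_{opt}(2n^\star+K)$ with $n^\star$ the optimal index.

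\medskip

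To obtain the closed form (\ref{delta_n}), I would set $F_{n-1}=F_n$ and solve for $N_{coh}$, giving $\Delta_n=[(2n+K)g_n-(2(n-1)+K)g_{n-1}]/[g_n-g_{n-1}]$. Corollary~\ref{Corollary1} makes the denominator immediate: the two optimal vectors differ only at depths $\eta(n)$ and $\eta(n)+1$, with increments $-1$ and $+3$, so $g_n-g_{n-1}=3^{-\eta(n)}\bigl(C_{\eta(n)+1}-C_{\eta(n)}\bigr)$. Writing $g_{n-1}$ in closed form using Theorem~\ref{Theorem1} (only its $\eta(n)$-th and $(\eta(n)+1)$-th components are nonzero) and simplifying collapses the above to~(\ref{delta_n}); the $K\xi(n)$ contribution emerges as $KC_{\eta(n)}$ divided by the $g$-jump, while the $\sum_{i=0}^{\eta(n)-1}K3^i$ contribution arises from the depth-count built into the $\chi$ function.

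\medskip

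The main obstacle is showing that on $[\Delta_n,\Delta_{n+1})$ the index $n$ is not merely beating its nearest neighbors but is \emph{globally} optimal. I would argue via discrete concavity of $\{F_n\}$. First, $\eta(n)=\chi(2n+K-2)$ is non-decreasing in $n$, since adding partitioning acts can only push the shallowest leaf depth deeper or leave it unchanged; combined with the linear-growth relation $C_{i+1}\simeq C_i+2\gamma\log_2\alpha$, this makes the jumps $d_n:=g_n-g_{n-1}=3^{-\eta(n)}(C_{\eta(n)+1}-C_{\eta(n)})$ positive and non-increasing in $n$. A direct expansion then gives
\[
(F_n-F_{n-1})-(F_{n+1}-F_n)=(d_n-d_{n+1})\Bigl(1-\tfrac{2(n-1)+K}{N_{coh}}\Bigr)+\tfrac{4 d_{n+1}}{N_{coh}}\geq 0,
\]
a sum of two nonnegative terms, so $\{F_n\}$ is discretely concave. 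For a concave sequence every local maximum is global, so $n^\star$ is pinned down entirely by the two neighbor tests $F_{n^\star}\geq F_{n^\star\pm 1}$, which---by the very definition of $\Delta_n$ as the sign-flip point of $F_n-F_{n-1}$---are equivalent to $\Delta_{n^\star}\leq N_{coh}\leq \Delta_{n^\star+1}$. Hence $\mathbf{p}_{opt}=\mathbf{p}'_{opt}(2n^\star+K)$ on $[\Delta_{n^\star},\Delta_{n^\star+1})$. For the extremal regime $N_{coh}\geq\Delta_{N_{L,K}}$, the same concavity argument selects $n^\star=N_{L,K}$, i.e., $N_{p0}=LK/3$; applying Theorem~\ref{Theorem1} at this length forces $\chi(LK/3)=\log_3 L-1$ and concentrates all the mass in the deepest component, yielding $\mathbf{p}_{opt}=(0,\ldots,0,LK/3)$.
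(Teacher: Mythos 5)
Your proof is correct and follows essentially the same route as the paper's: fix the pilot length, invoke Theorem~\ref{Theorem1} for the inner maximization, derive $\Delta_n$ by equating adjacent net-rate curves (your formula reduces exactly to the paper's $\Delta_n = 2n+K+2g_{n-1}/d_n$), and show that nearest-neighbor comparisons pin down the global optimum. The only difference is in the bookkeeping of that last step --- you argue via discrete concavity of $\{F_n\}$ from $d_n$ non-increasing, while the paper proves $\Delta_{n+1}>\Delta_n$ directly by the two-case analysis $\eta(n+1)\in\{\eta(n),\eta(n)+1\}$ and reads the conclusion off the ordered crossing points; both hinge on $\eta(n)$ being non-decreasing, and your concavity inequality technically needs $N_{coh}\geq 2(n-1)+K$ (indices violating this give $F_n\leq 0$ and so cannot be optimal, which closes the small gap).
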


\subsection{Proofs of Theorems 1 and 2}

\subsubsection{Proof of Theorem 1}

The sum rate in (\ref{sum rate}) can be expressed using the transition vector $\mathbf{t}$ via (\ref{ttop}):
\begin{equation*}
C_{sum}(\mathbf{t})= KC_{0} + \sum\limits_{i=0}^{\log_3 L-2}t_{i}3^{-i}(C_{i+1}-C_{i}).
\end{equation*}
Because $C_{i}$ is a linear function of $i$, the difference $(C_{i+1}-C_{i})$ is a constant. Therefore, all we have to do is to find the optimal $\mathbf{t}$ which maximizes $\sum_{i=0}^{\log_3 L-2}t_{i}3^{-i}$.

Write the transition vector for the pilot assignment vector of (\ref{Thm1 result}):
\begin{equation} \label{pf_t_i'}
t_{i}'=
\begin{cases}
K3^{i} & i < \chi(N_{p0}) \\
\frac{N_{p0}-K}{2} - \displaystyle\sum_{i=0}^{\chi(N_{p0}) - 1}K3^{i}	& i= \chi(N_{p0}) \\
0 & i > \chi(N_{p0}).
\end{cases}
\end{equation}
Then, it can be shown that $\sum_{i=0}^{log_3 L-2}t_{i}'\cdotp3^{-i}$ is always greater than $\sum_{i=0}^{log_3 L-2}t_{i}\cdotp3^{-i}$ for any other transition vectors $\mathbf{t}=(t_{0},t_{1},\cdots,t_{log_3 L-2})$ corresponding to  $\mathbf{p} \in \Omega(N_{p0})$:

\textbf{Case 1:} if $t_{\chi(N_{p0})}' \geq t_{\chi(N_{p0})}$,
\begin{align*}
\delta  & \triangleq \sum\limits_{i=0}^{log_3 L-2} t_{i}'\cdotp3^{-i} -\sum\limits_{i=0}^{log_3 L-2} t_{i} \cdotp3^{-i} 
\nonumber\\
&= \sum\limits_{i=0}^{\chi(N_{p0})}(t_{i}'-t_{i})\cdotp3^{-i}+\sum\limits_{i=\chi(N_{p0})+1}^{log_3 L-2}(0-t_{i})\cdotp3^{-i} \nonumber\\
& \geq \left(\sum\limits_{i=0}^{\chi(N_{p0})}(t_{i}'-t_{i})\right)\cdotp3^{-\chi(N_{p0})} 
\nonumber\\
& \quad \quad \quad +\left(\sum\limits_{i=\chi(N_{p0})+1}^{log_3 L-2}(0-t_{i})\right)\cdotp3^{-(\chi(N_{p0})+1)}\nonumber\\
& = \left(\sum\limits_{i=0}^{\chi(N_{p0})}(t_{i}'-t_{i})\right)\cdotp(3^{-\chi(N_{p0})}-3^{-(\chi(N_{p0})+1)}) \geq 0 
\end{align*}
where the first inequality holds due to $0\leq t_{i}\leq K3^{i}=t_{i}'$ for $0\leq i < \chi(N_{p0})$ (by Lemma \ref{Lemma_transition_vector}), the assumption of $t_{\chi(N_{p0})}' \geq t_{\chi(N_{p0})}$ , and the fact that $t_{i} \geq 0$ for $0\leq i \leq log_3 L-2 $ (by Lemma \ref{Lemma_transition_vector}).
The last equality holds because $\sum_{i=0}^{\log_3 L-2}t_{i}=\frac{N_{p0}-K}{2}$ for any valid transition vector $\mathbf{t}=(t_{0},t_{1},\cdots,t_{log_3 L-2})$ by Lemma \ref{Lemma_transition_vector}. 
Similarly,

\textbf{Case 2:} if $t_{\chi(N_{p0})}' < t_{\chi(N_{p0})}$,
\begin{align*}
\delta  & \triangleq \sum\limits_{i=0}^{log_3 L-2} t_{i}'\cdotp3^{-i} -\sum\limits_{i=0}^{log_3 L-2} t_{i} \cdotp3^{-i}  
\nonumber\\
&= \sum\limits_{i=0}^{\chi(N_{p0})-1}(t_{i}'-t_{i})\cdotp3^{-i}+\sum\limits_{i=\chi(N_{p0})}^{log_3 L-2}(t_{i}'-t_{i})\cdotp3^{-i} \nonumber\\
& \geq \left(\sum\limits_{i=0}^{\chi(N_{p0})-1}(t_{i}'-t_{i})\right)\cdotp3^{-(\chi(N_{p0})-1)} 
\nonumber\\
& \quad \quad \quad +\left(\sum\limits_{i=\chi(N_{p0})}^{log_3 L-2}(t_{i}'-t_{i})\right)\cdotp3^{-\chi(N_{p0})}\nonumber\\
& = \left(\sum\limits_{i=0}^{\chi(N_{p0})-1}(t_{i}'-t_{i})\right)\cdotp(3^{-(\chi(N_{p0})-1)}-3^{-\chi(N_{p0})}) \geq 0.
\end{align*}
  
Thus, we see that $\delta = 0$ iff $\mathbf{t}=\mathbf{t}'$, due to the inherent property of a transition vector (second equation in Lemma \ref{Lemma_transition_vector}).
Therefore, (\ref{pf_t_i'}) gives the maximum $C_{sum}$ value among all valid transition vectors.
Applying the inverse transformation (\ref{ttop}) to (\ref{pf_t_i'}) proves Theorem \ref{Theorem1}.

\subsubsection{Proof of Theorem 2}

\begin{figure}[!t]
	\centering
	\includegraphics[height=30mm]{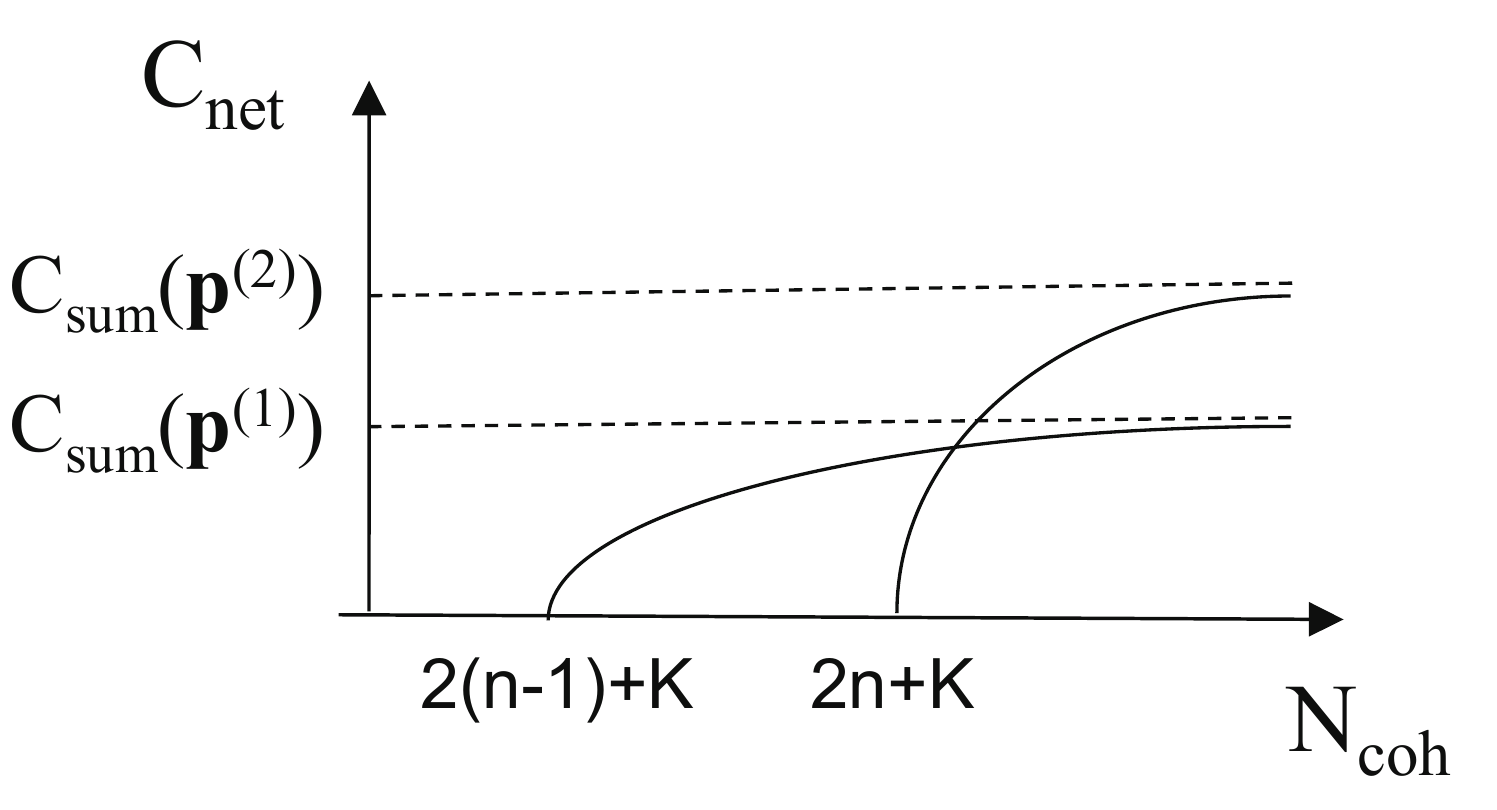}
	\caption{Graph of $C_{net}$ for two adjacent $N_{p0}$ values}
	\label{Fig:Theorem2_proof_2}
	
\end{figure}

To minimize the notational cluttering, denote $\mathbf{p}^{(1)} = \mathbf{p}'_{opt}(2(n-1)+K)$, $\mathbf{p}^{(2)} = \mathbf{p}'_{opt}(2n+K)$ and $\mathbf{p}^{(3)} = \mathbf{p}'_{opt}(2(n+1)+K)$. First, we wish to find the relationship between $C_{sum}(\mathbf{p}^{(1)})$ and $C_{sum}(\mathbf{p}^{(2)})$ to get the expression (\ref{delta_n}) for $\Delta_{n}$. 
\begin{align*}
&C_{sum}(\mathbf{p}^{(2)})
= \sum\limits_{i=0}^{\log_3 L-1} 3^{-i} p_{i}^{(2)}  C_{i}\nonumber\\
&= \left( \sum _{\substack {i \neq \eta(n) \\ i \neq \eta(n) + 1 }} 3^{-i} p_{i}^{(1)}  C_{i} \right)  + 3^{-\eta(n)} (p_{\eta(n)}^{(1)} - 1) C_{\eta(n)} \nonumber\\
&\quad \quad \quad + 3^{-(\eta(n) + 1)} (p_{\eta(n)+1}^{(1)} +3) C_{\eta(n)+1}\nonumber\\  
&= C_{sum}(\mathbf{p}^{(1)}) + 3^{-\eta(n)}(C_{\eta(n)+1}- C_{\eta(n)}  )\nonumber\\
&\simeq C_{sum}(\mathbf{p}^{(1)}) + 6\cdot3^{-\eta(n)} > C_{sum}(\mathbf{p}^{(1)})
\end{align*}
where the first equality is from the definition of $C_{sum}$, the second equality is from Corollary \ref{Corollary1}. The last approximation came from the mathematical analysis for $C_{i}$ in Section III-B. It is now clear  that the curve for $\mathbf{p}^{(2)}$ has a larger saturation value than the curve for $\mathbf{p}^{(1)}$, as illustrated in Fig. \ref{Fig:Theorem2_proof_2}.


Now, to get to the expression for the intersection point $\Delta_{n}$ in Fig. \ref{Fig:Theorem2_proof_2}, we start with $C_{net}(\mathbf{p}^{(1)}) = C_{net}(\mathbf{p}^{(2)})$ for $N_{coh}=\Delta_{n}$, developing a series of equalities:
\begin{align}
&\frac{\Delta_{n} - (2(n-1)+K)}{\Delta_{n}}C_{sum}(\mathbf{p}^{(1)}) \nonumber\\
= & \frac{\Delta_{n} - (2n+K)}{\Delta_{n}}C_{sum}(\mathbf{p}^{(2)}), \nonumber\\
&(\Delta_{n} - 2n - K + 2)C_{sum}(\mathbf{p}^{(1)}) \nonumber\\
= &(\Delta_{n} - 2n - K) \{C_{sum}(\mathbf{p}^{(1)}) + 3^{-\eta(n)} (C_{\eta(n)+1} - C_{\eta(n)})\},\nonumber\\ 
&2C_{sum}(\mathbf{p}^{(1)}) \nonumber\\
= &(\Delta_{n} - 2n - K) (C_{\eta(n)+1} - C_{\eta(n)}) 3^{-\eta(n)} 
\end{align}
leading finally to
\begin{align}\label{proof2_Delta_n}
\Delta_{n} 
&= 2n+K +\frac{2\sum\limits_{i=0}^{\log_3 L-1} 3^{-i} p_{i}^{(1)}  C_{i}}{3^{-\eta(n)}(C_{\eta(n)+1} - C_{\eta(n)})}.
\end{align}
However, by inserting $N_{p0}=2n+K-2$ into (\ref{Thm1 result}), we can express $\sum\limits_{i=0}^{\log_3 L-1} 3^{-i} p_{i}^{(1)}  C_{i}$ as:
\begin{align}\label{proof2_part}
&\sum\limits_{i=0}^{\log_3 L-1} 3^{-i} p_{i}^{(1)}  C_{i} \nonumber\\
&= 3^{-\eta(n)} \left(n-1 - \sum\limits_{s=0}^{\eta(n)-1} K3^{s}\right) \left(C_{\eta(n)+1}-C_{\eta(n)}\right) \nonumber\\
& \quad \quad + KC_{\eta(n)}.
\end{align}
Inserting (\ref{proof2_part}) into (\ref{proof2_Delta_n}) results in (\ref{delta_n}).

As the net rate curves for $\mathbf{p}^{(1)}$ and $\mathbf{p}^{(2)}$ cross at $N_{coh}=\Delta_{n}$, the curves for $\mathbf{p}^{(2)}$ and $\mathbf{p}^{(3)}$ intersect at $N_{coh}=\Delta_{n+1}$. We shall now prove that $\Delta_{n}$ is a monotonically increasing sequence. First, note $\Delta_{n+1} - \Delta_{n} = 4 + 2 \gamma(n)$
where $\gamma(n) = \xi(n+1)-\xi(n) - ( \sum_{i=0}^{\eta(n+1)-1}3^{i} - \sum_{i=0}^{\eta(n)-1}3^{i} )$.  
In case of $\eta(n) = \eta(n+1)$, we have $\xi(n+1) = \xi(n)$, so that $\Delta_{n+1} - \Delta_{n}=4>0$.
For other cases, i.e., $\eta(n+1) = \eta(n) + 1$, we have $\xi(n+1) = \frac{3^{\eta(n)+1} C_{\eta(n)+1}}{C_{\eta(n)+2}-C_{\eta(n)+1}}$, so that 
\begin{align*}
\gamma(n) &= \frac{3^{\eta(n)+1} C_{\eta(n)+1}}{C_{\eta(n)+2}-C_{\eta(n)+1}}  - \frac{3^{\eta(n)} C_{\eta(n)}}{C_{\eta(n)+1}-C_{\eta(n)}} - 3^{\eta(n)-1}\nonumber\\
&\simeq \frac{3^{\eta(n)+1} (C_{\eta(n)}+6)}{6} -\frac{3^{\eta(n)} C_{\eta(n)}}{6}- 3^{\eta(n)} \nonumber\\
&= 2 \cdot 3^{\chi(2n-1)} \left(1+\frac{C_{\chi(2n-1)}}{6}\right) > 0.
\end{align*} 
The last approximation came from the mathematical analysis for $C_{i}$ in Section III-B. 
Therefore, in both cases, $\Delta_{n+1} - \Delta_{n} > 0.$
\begin{figure}[!t]
	\centering
	\includegraphics[width=85mm]{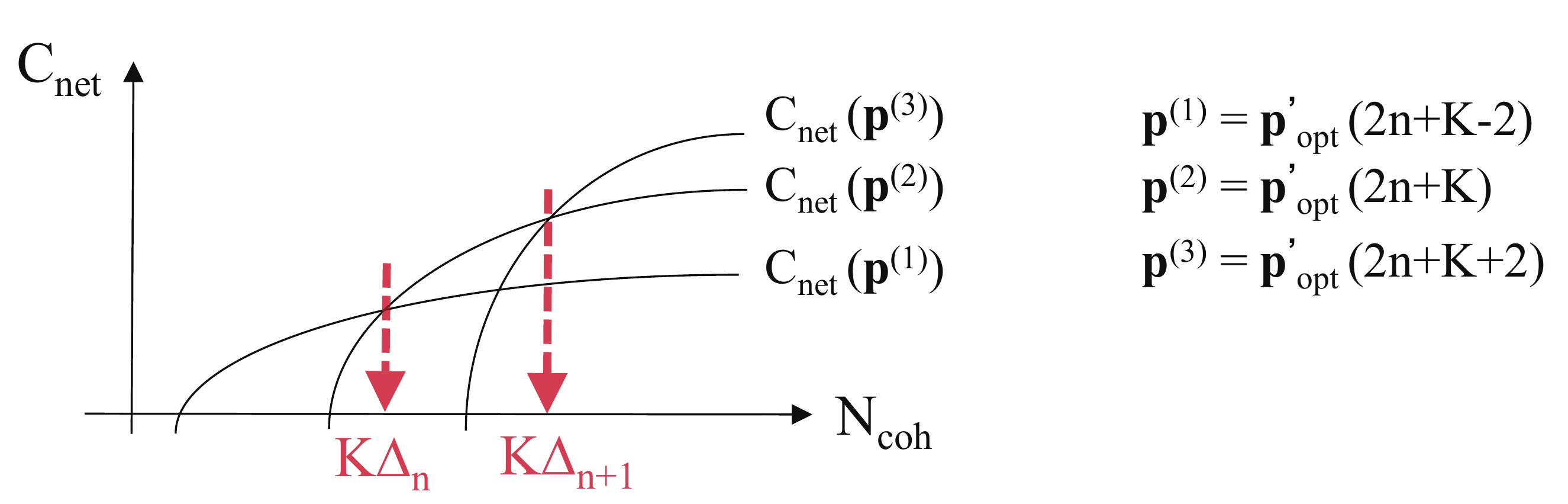}
	\caption{Graph of $C_{net}$ for three adjacent $N_{p0}$ values}
	\label{Fig:Theorem2_proof_3}
\end{figure}

Since $\Delta_n$ is an increasing sequence, we can complete a picture as 
illustrated in Fig. \ref{Fig:Theorem2_proof_3}. 
For $N_{coh}>\Delta_{n+1}$, $C_{net}(\mathbf{p}^{(3)})$ is greater than 
any net rate curves that will intersect with itself on the right side of 
$\Delta_{n+1}$. For $N_{coh}<\Delta_{n}$, $C_{net}(\mathbf{p}^{(1)})$ is greater than 
any net rate curves that will intersect itself on the left side of 
$\Delta_{n}$. Therefore, it is clear that in the interval $\Delta_{n}\leq N_{coh} < \Delta_{n+1}$, $C_{net}(\mathbf{p}^{(2)})$ is the largest of all net rate curves
and 
$\mathbf{p}^{(2)}=\mathbf{p}^{'}_{opt}(2n+K)$ is optimal among all possible pilot assignment vectors. 

For the boundary case, if $N_{coh} \geq \Delta_{N_{L,K}}$, then $\mathbf{p}_{opt}=\mathbf{p}^{'}_{opt}(LK/3)$. However, $\mathbf{p}^{'}_{opt}(LK/3)=(0,\cdots,0,LK/3)$ since $\Omega(LK/3)=\{(0,\cdots,0,LK/3)\}$ (note the analysis in Step 2 of the proof for Lemma 1). Therefore, $\mathbf{p}_{opt}=(0\cdots,0,LK/3)$ for $N_{coh} \geq \Delta_{N_{L,K}}$.
This completes the proof.

\section{Numerical Results}\label{Section:NumResult}
Simulation is necessary to obtain the $C_i$ values in (\ref{sum rate}). To get $C_i$, the $\beta$ terms in (\ref{achievableR}) need to be generated pseudo-randomly according to the assumed underlying statistical properties. The simulation is based on the following system parameters, which are consistent with those in \cite{bjornson2016massive}.  
We assume a signal decay exponent of $\gamma = 3.7$, a cell radius of $r$ meters and a cell-hole radius of $0.14r$. Note that $\{C_i\}$ does not depend on the $r$ value.
The number of cells are fixed to $L=81$, and the user locations are uniform-random within a cell. 
To generate each $C_i$ value, an average is taken over 100,000 pseudo-random trials.
Once the $C_i$ values are obtained, the optimal pilot length and the pilot assignment vector as well as the net throughput can be found for various coherence intervals $N_{coh}$. In this section, simulation results on the optimal pilot assignment rule and maximum net throughput is presented. We begin with single-user ($K=1$) case, and then analyze general multi-user ($K>1$) case.

\subsection{Single-user case}


\begin{table}
	\small
	\centering
	\caption{Optimal pilot assignment for $L=81$}
	\subfloat[$K=1$]{%
		\label{Table:Optimal assignment}
		\begin{tabular}{|c|c|c|}
			\hline
			$N_{coh}$ & $\mathbf{p}_{opt}$  & $N_{pil}(\mathbf{p}_{opt})$\tabularnewline
			\hline
			$0\sim 4$ & $(1,0,0,0)$ & $1$\tabularnewline
			$5\sim 17$ & $(0,3,0,0)$ & $3$\tabularnewline
			$18\sim 21$ & $(0,2,3,0)$ & $5$\tabularnewline
			$22\sim 25$ & $(0,1,6,0)$ & $7$\tabularnewline
			$26\sim 68$ & $(0,0,9,0)$ & $9$\tabularnewline
			$69\sim 72$ & $(0,0,8,3)$ & $11$\tabularnewline
			$\vdots$ & $\vdots$ & $\vdots$\tabularnewline
			$101\sim$ & $(0,0,0,27)$ & $27$\tabularnewline
			\hline
			\end{tabular}%
		}\hspace{1cm}
		\subfloat[$K=2$]{%
			\label{Table:Optimal assignment2}
			\begin{tabular}{|c|c|c|}
				\hline
				$N_{coh}$ & $\mathbf{p}_{opt}$  & $N_{pil}(\mathbf{p}_{opt})$\tabularnewline
				\hline
				$0\sim 6$ & $(2,0,0,0)$ & $2$\tabularnewline
				$7\sim 10$ & $(1,3,0,0)$ & $4$\tabularnewline
				$11\sim 32$ & $(0,6,0,0)$ & $6$\tabularnewline
				$33\sim 36$ & $(0,5,3,0)$ & $8$\tabularnewline
				$37\sim 40$ & $(0,4,6,0)$ & $10$\tabularnewline
				$\vdots$ & $\vdots$ & $\vdots$\tabularnewline
				$203\sim $ & $(0,0,0,54)$ & $54$\tabularnewline
				\hline
		\end{tabular}%
	}
\end{table}

\begin{figure}
	\centering
	\subfloat[][$K=1$]{\includegraphics[width=60mm]{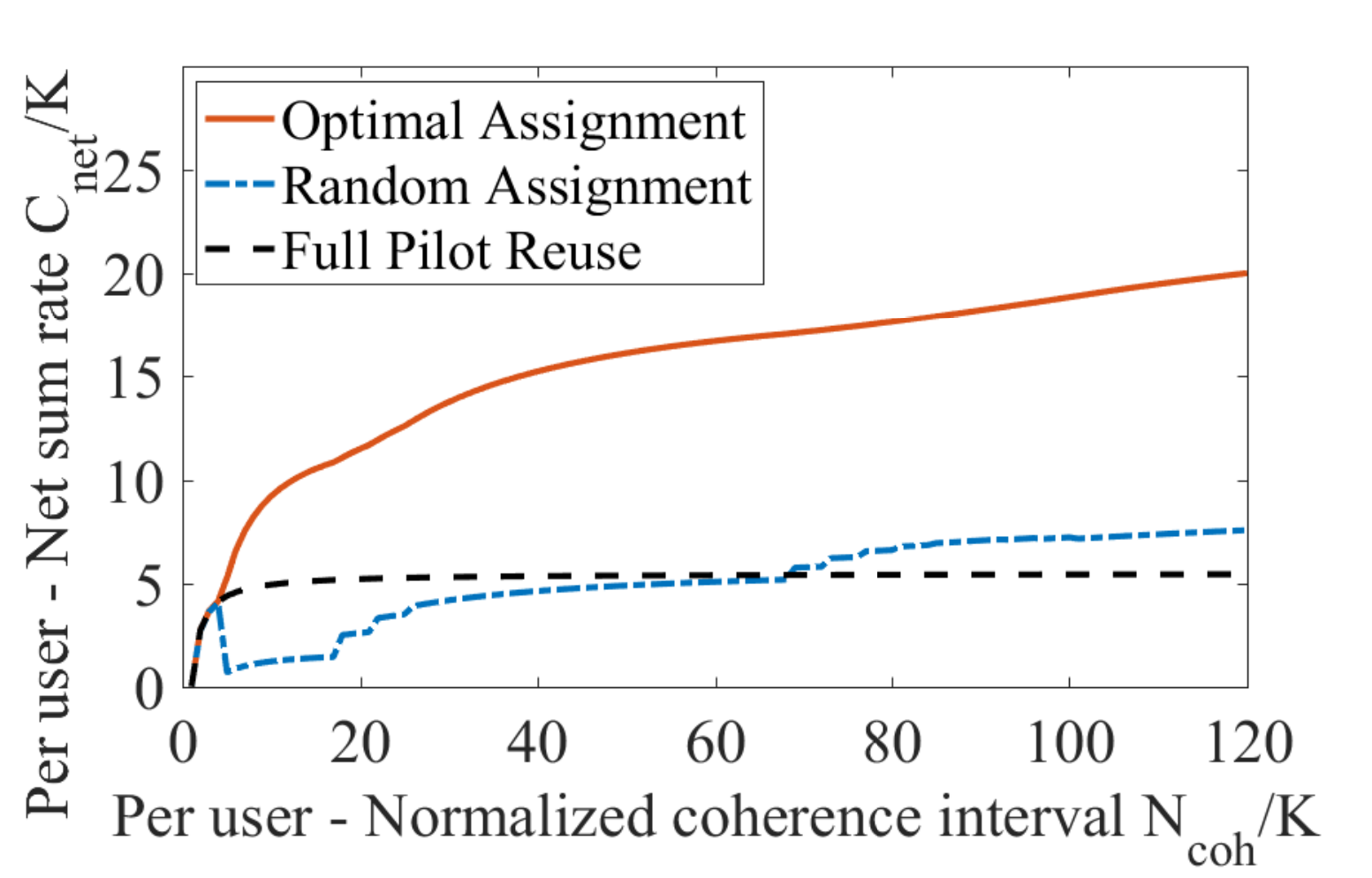}\label{Fig:K=1 Simulation}}
	\quad \quad
	\subfloat[][$K=14$]{\includegraphics[width=55mm]{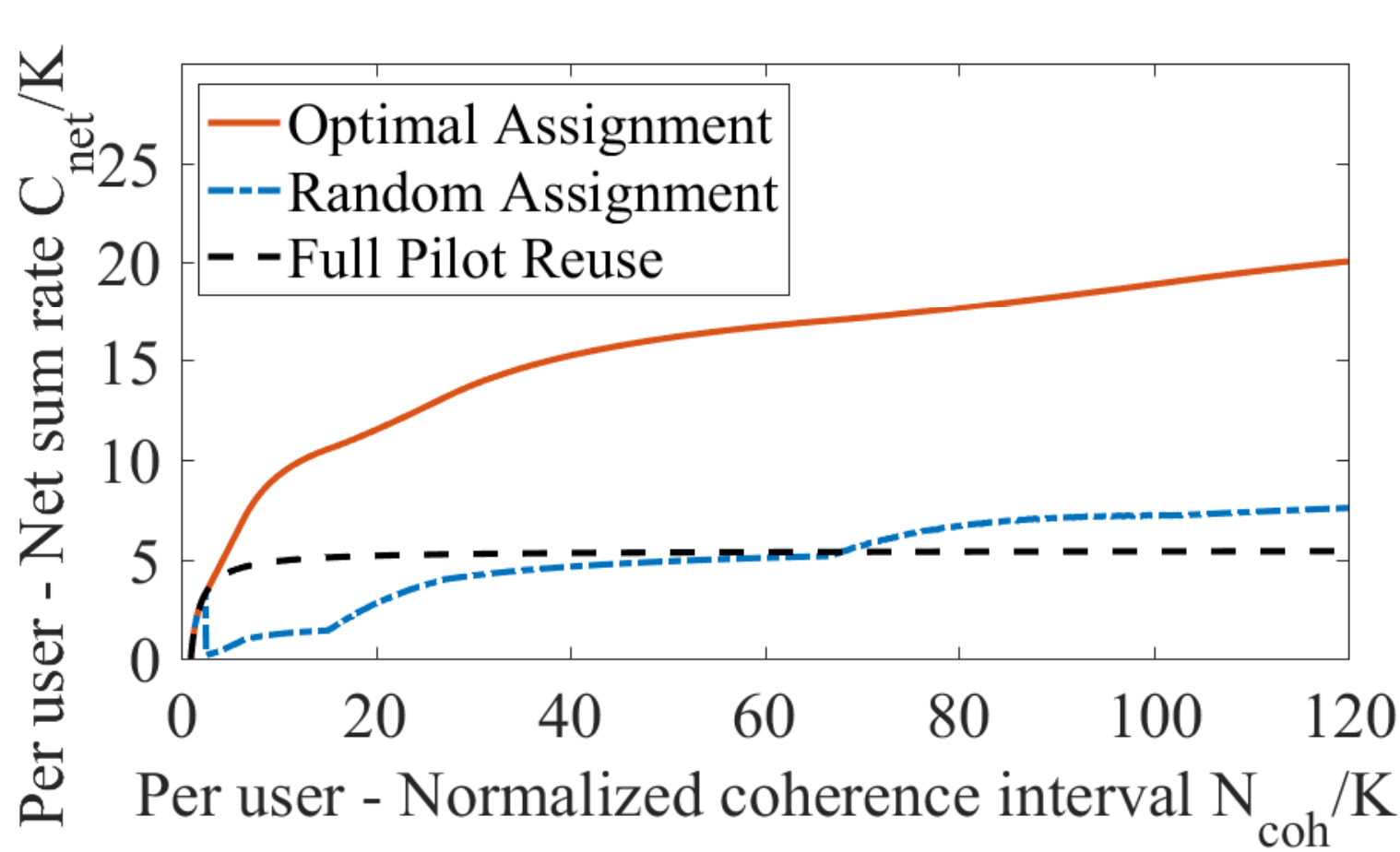}\label{Fig:K=14 Simulation}}
	\caption{Net rates of various pilot assignments for $L=81$}
\end{figure}


 Table \ref{Table:Optimal assignment} shows the optimal pilot assignment results for various values of $N_{coh}$.
We confirm that the simulation results of Table \ref{Table:Optimal assignment} are consistent with 
the mathematical analysis given in Theorem \ref{Theorem2}.
As $N_{coh}$ increases, the optimal pilot sequence gradually becomes longer. Moreover, the optimal assignment vectors show a pattern of tossing 1 from the left most non-zero component to increase the adjacent component by 3, for an example $(0,3,0,0) \rightarrow (0,2,3,0)$ (consistent with Corollary $\ref{Corollary1}$).

Fig. \ref{Fig:K=1 Simulation} shows the average achievable net rates for various pilot assignments versus normalized coherence interval. The random assignment means that a pilot sequence is chosen 
randomly and independently from $N_{pil}(\mathbf{p}_{opt})$ orthogonal pilots, and assigned to each user.
Therefore, the optimal assignment and the random assignment use the same amount of pilots for any given $N_{coh}$.
It can be seen that a substantial performance gain is obtained using the optimal method compared to the full pilot reuse case as $N_{coh}$ increases 
beyond 5. The random assignment is worse than the full reuse initially but eventually outperforms the latter as $N_{coh}$ grows.

For $N_{coh}=10,20$ and $40$, for example, the optimal assignment method has $87\%,121\%$ and $185\%$ higher net rates $C_{net}$ than the full pilot reuse assignment, respectively.
As coherence time increases, the benefit of allocating more time for pilots is considerable. 
Also, the non-shrinking performance gap between the optimal assignment and the random assignment indicate that
structured optimal assignment is required for a given pilot time, in order to maximize the net throughput of the network.

\subsection{Multi-user case}
\label{Multiuser_numerical result}


In Table \ref{Table:Optimal assignment2}, the optimal assignment vectors and pilot lengths for various $N_{coh}$ are shown, assuming $L=81$ and $K=2$.
Like in the case for $K=1$, the optimal pilot assignment vectors have a predictable form (consistent with Corollary $\ref{Corollary1}$ and Theorem $\ref{Theorem2}$).


Fig. \ref{Fig:K=14 Simulation} shows the plots of the per-user net rates $C_{net}/K$
versus 
$N_{coh}/K$. These results are actually found for $K=14$, but very similar numerical results 
are obtained for the optimal scheme irrespective of the particular values of $K$ while
the results for the full pilot reuse are identical across all values of $K$. As a case in point,
it can be seen that the plots in Fig. \ref{Fig:K=1 Simulation} obtained for 
$K=1$ are nearly identical to those in Fig. \ref{Fig:K=14 Simulation} corresponding to $K=14$.
Using the optimal pilot assignment scheme,
the per-user net rate improves 
substantially with increasing $N_{coh}/K$, relative to the full reuse scheme.

\begin{figure}[!t]
	\centering
	\includegraphics[height=50mm]{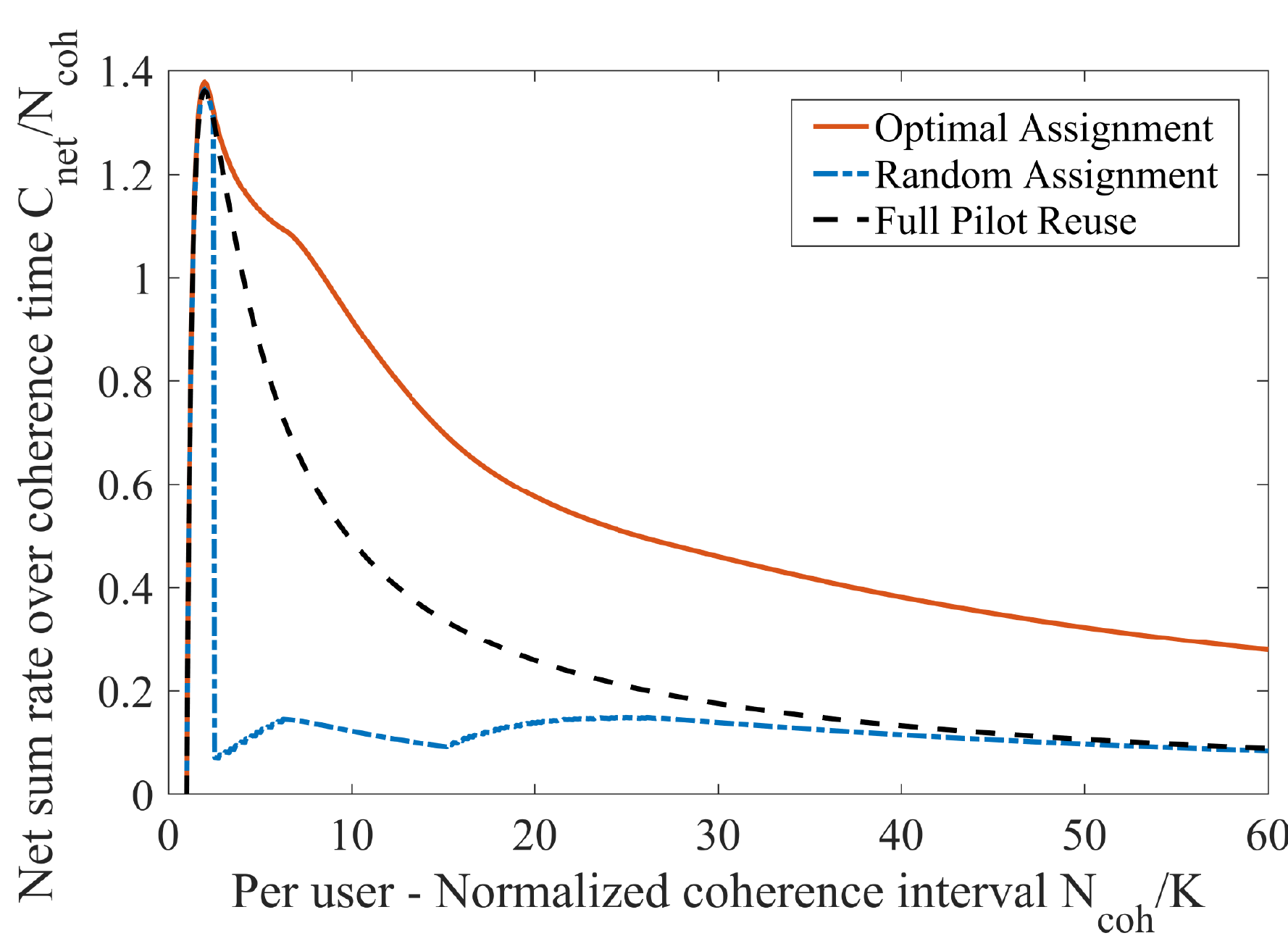}
	\caption{$C_{net}/N_{coh}$ versus $N_{coh}/K$ for different pilot assignment schemes}
	\label{Fig:K=14 Simulation2}
\end{figure}

Fig. \ref{Fig:K=14 Simulation2} shows $C_{net}/N_{coh}$ versus $N_{coh}/K$ for 
optimal, full reuse and random pilot assignment schemes. 
Again, $K=14$ is used to generate these plots, 
but the plots do not change noticeably 
for different values of $K$. 
These plots give an insight 
into how the net sum rate changes as $K$ decreases while $N_{coh}$ is held fixed.
Notice that the maximum net sum-rate occurs at $N_{coh}/K=2$ 
and at this point the optimal scheme reduces to full reuse,
consistent with the
Marzetta's analysis \cite{marzetta2010noncooperative}.
However, the message here is, again, that when we do not have a control over 
$N_{coh}/K=2$, the optimal pilot assignment strategy may have a substantial net sum-rate advantage 
over full pilot reuse.

To appreciate how large $N_{coh}/K$ can be in some real-world scenarios,
take an indoor office wireless channel with $T_{coh}=50$ micro-sec and $T_{del}=50$ nsec, yielding 
$N_{coh}=1000$. If the user density cannot be allowed to be more than $K=20$ users per cell, 
we would be focusing on $N_{coh}/K=50$ and at this point, the optimal scheme gives  a 300\% net sum-rate improvement over full pilot reuse. As another example, consider an urban outdoor environment with a fairly high user mobility giving rise to a 1 msec coherence time interval. With
a 2 micro-sec delay spread, for example, this gives $N_{coh}=500$, and 
assuming not more than $K=25$ users are to be served, we are interested in the net sum-rates at $N_{coh}/K=20$. From either Fig. \ref{Fig:K=14 Simulation} or Fig. \ref{Fig:K=14 Simulation2}, we see that a net throughput improvement of $121\%$ is possible via optimal pilot assignment, relative to full pilot reuse.

\section{Further Comments}

\subsection{Performance analysis for finite antennas case}\label{Section:FiniteM}
The mathematical analysis conducted in this paper focuses on the scenario of base stations having infinitely many antennas. However, in practical applications of massive MIMO, finite BS antennas must be considered. For example, $128$ BS antennas have been used in the recent literature \cite{larsson2014massive, gao2015massive}. Thus, here we also consider optimal pilot assignment for finite $M$ cases.
For arbitrary $M$, the uplink spectral efficiency of massive MIMO systems with pilot reuse factor $\beta$ and maximal-ratio-combining (MRC) receiver is obtained in Theorem $1$ of  \cite{bjornson2016massive}. Our analysis on optimal pilot assignment for finite $M$ adopts modified versions of several notations defined in \cite{bjornson2016massive}. First, $\rho$ denotes the signal-to-noise ratio (SNR) value, and $L_{j}^{(i)}$ represents the set of cells which share pilot with cell $j$ by a reuse factor $3^i$. Moreover, based on $r_{jkl}$ defined in Section \ref{Section:SystemModel}, the following mathematical definitions are used:
\begin{gather}
\mu_{jl}^{(\omega)}\triangleq\mathbb{E}[(\frac{r_{lkl}}{r_{jkl}})^{\gamma \omega}] , \ \ \ \ \omega=1,2\\
\mu_0 \triangleq \sum_{l=1}^{L} \mu_{jl}^{(1)}, \ \ \ \ 
\mu_1^{(i)} \triangleq \sum_{l\in L_j^{(i)} \setminus \{j\}} \mu_{jl}^{(1)}, \ \ \ \ \nonumber\\
\mu_2^{(i)} \triangleq \sum_{l\in L_j^{(i)} \setminus \{j\}} (\mu_{jl}^{(1)})^2, \ \ \ \ 
\mu_3^{(i)} \triangleq \sum_{l\in L_j^{(i)} \setminus \{j\}} \mu_{jl}^{(2)}.
\end{gather}

Given the pilot assignment vector  $\mathbf{p}$, the interference term $I_{i}(M)$ for a user with pilot reuse factor $3^i$ can be expressed as 
\begin{align*}
I_{i} (M) &= \mu_3^{(i)} + \frac{\mu_3^{(i)}-\mu_2^{(i)}}{M} \nonumber\\
& \quad \quad + \frac{(K\mu_0 + \frac{1}{\rho})(1+\mu_1^{(i)}+\frac{1}{N_{pil}(\mathbf{p})\rho})}{M}.
\end{align*}
Here, $\mu_3^{(i)}$ represents the pilot contamination term remaining even if $M$ increases without bound, while the other terms are intra- and inter-cell interferences which are considered only when $M$ is finite.
The net throughput of a user with pilot reuse factor $3^i$ can be obtained as
$SE^{(i)} = \left( 1-\frac{N_{pil}(\mathbf{p})}{N_{coh}}\right)  \log_2\left( 1+\frac{1}{I_{i}(M)}\right),$
while the per-cell net throughput for a pilot assignment vector $\mathbf{p}=(p_{0},p_{1},\cdots,p_{\log_3 L-1})$ is
\begin{align}\label{Eqn:net_rate_finite_M}
C_{net}(\mathbf{p}, M)=\left( 1-\frac{N_{pil}(\mathbf{p})}{N_{coh}}\right)& \nonumber\\
\sum_{i=0}^{\log_3 L - 1} 3^{-i}& p_i \log_2\left( 1+\frac{1}{I_{i}(M)}\right),
\end{align}
which reduces to (\ref{net sum rate}) when $M$ tends to infinity. Then, the optimal pilot assignment vector for finite $M$ can be expressed as 
\begin{equation}
\mathbf{p}_{opt}(M) = \underset{\mathbf{p}\in P_{L,K}}{\arg\max}\ C_{net}(\mathbf{p},M).
\end{equation}

Here, we show the behavior of $\mathbf{p}_{opt}(M)$ for finite $M$, as observed by simulation results. Table 	\ref{Table:Optimal assignment(FiniteM)} shows the optimal pilot assignment vector for the $L=81, M=128, K=10$ case. For numerical calculation, the SNR value is set to $\rho = 5$dB while the setting for other parameters is specified in Section \ref{Section:NumResult}.
 As $N_{coh}/K$ increases, the optimal pilot assignment $\mathbf{p}_{opt}(M)$ chooses 
less aggressive pilot reuse, by utilizing a larger number $N_{pil}(\mathbf{p}_{opt}(M))$ of pilots. Moreover, the optimal assignment vector shows a similar pattern to what was observed when $M$ tended to infinity: 
tossing 1 from the left most non-zero component to increase the adjacent component by 3, for example $(9,3,0,0) \rightarrow (8,6,0,0)$. This implies that even in systems with finite BS antennas, where both pilot contamination and other interference terms are mixed, the optimal assignment chooses the way of reducing the number of users suffering the most severe pilot contamination problem. 

\begin{table}
	\small
	\caption{Optimal pilot assignment for finite antennas ($L=81,M=128,K=10$)}
	\centering
	\label{Table:Optimal assignment(FiniteM)}
	\begin{tabular}{|c|c|c|}
		\hline
		$N_{coh}/K$ & $\mathbf{p}_{opt}(M)$  & $N_{pil}(\mathbf{p}_{opt}(M))$\tabularnewline
		\hline
		$0\sim 4.5$ & $(10,0,0,0)$ & $10$\tabularnewline
		$4.5\sim 4.9$ & $(9,3,0,0)$ & $12$\tabularnewline
		$4.9\sim 5.3$ & $(8,6,0,0)$ & $14$\tabularnewline
		$5.3\sim 5.7$ & $(7,9,0,0)$ & $16$\tabularnewline
		$5.7\sim 6.1$ & $(6,12,0,0)$ & $18$\tabularnewline
		$\vdots$ & $\vdots$ & $\vdots$\tabularnewline
		\hline
	\end{tabular}
\end{table}

Fig. \ref{Fig:FiniteM_Simulation} illustrates the per-user net rate for various pilot assignments as a function of $M$, under the setting of $L=27, K=10, N_{coh}=200$. Here, the conventional assignment represents the full pilot reuse case.
The solid lines represent the simulation results for finite $M$, while the asymptotic values for $M=\infty$ are illustrated in dashed/dash-dot lines. Here, the performance gap between optimal assignment and conventional assignment can be observed, at the number of BS antennas as small as $10$. In the case of $M=128$ and $M=1024$, the optimal assignment outperforms conventional assignment by $40\%$ and $84\%$, respectively. Note that the performance of conventional assignment saturates around $M=128$, so that additional BS antennas are redundant, while the optimal assignment has sufficient margin to increase throughput by increasing $M>128$. Therefore, in mm-Wave operation which can allow more uncorrelated antennas in a given area \cite{swindlehurst2014millimeter}, optimal assignment can enjoy a larger performance gain compared to conventional full pilot reuse. 



\begin{figure}[!t]
	\centering
	\includegraphics[height=50mm]{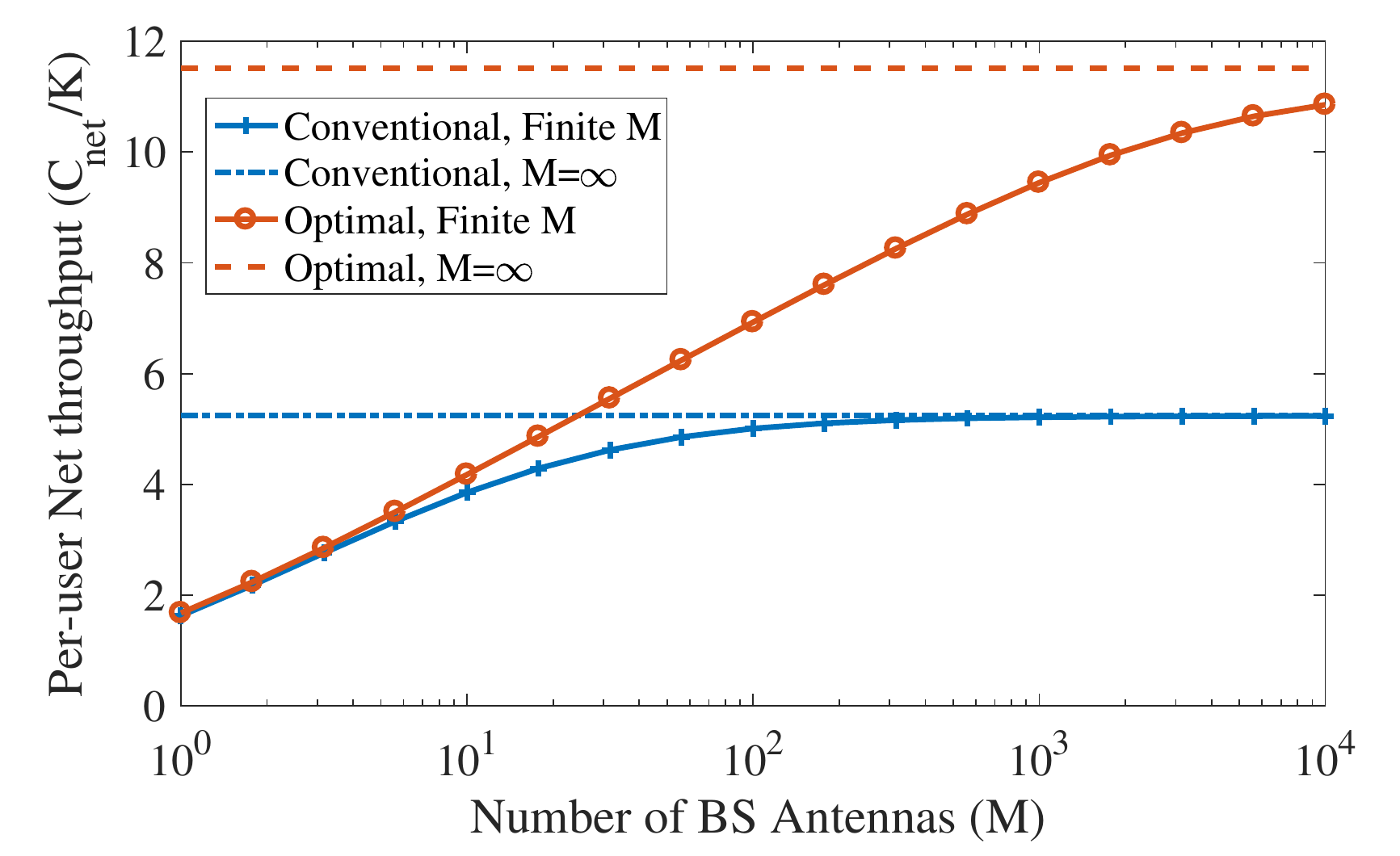}
	\caption{Net rate versus $M$ for various pilot assignments ($L=27, K=10, N_{coh} = 200 $)}
	\label{Fig:FiniteM_Simulation}
\end{figure}

To provide an insight on the performance of the optimal assignment, we also plotted the per-user throughput for various ratios of $M/K=2,4,10,20$. As can be seen in Fig. \ref{Fig:VariousM/K_Simulation}, the throughput keeps increasing as a function of $M$, as long as $M/K$ is large enough (greater than $10$ in the plots provided). However, for smaller ratios ($M/K<10$) the net throughput peaks and then starts to decreases as $M$ increases. We observe that this is because as $M$ increases, the decreasing rate of the coefficient ($1-\frac{N_{pil} (p)}{N_{coh}}$ ) in (\ref{Eqn:net_rate_finite_M}) is higher than the increasing rate of the main summation term when $M/K$ is not sufficiently high (i.e., massive MIMO effect is not realized).

\begin{figure}[!t]
	\centering
	\includegraphics[width=85mm]{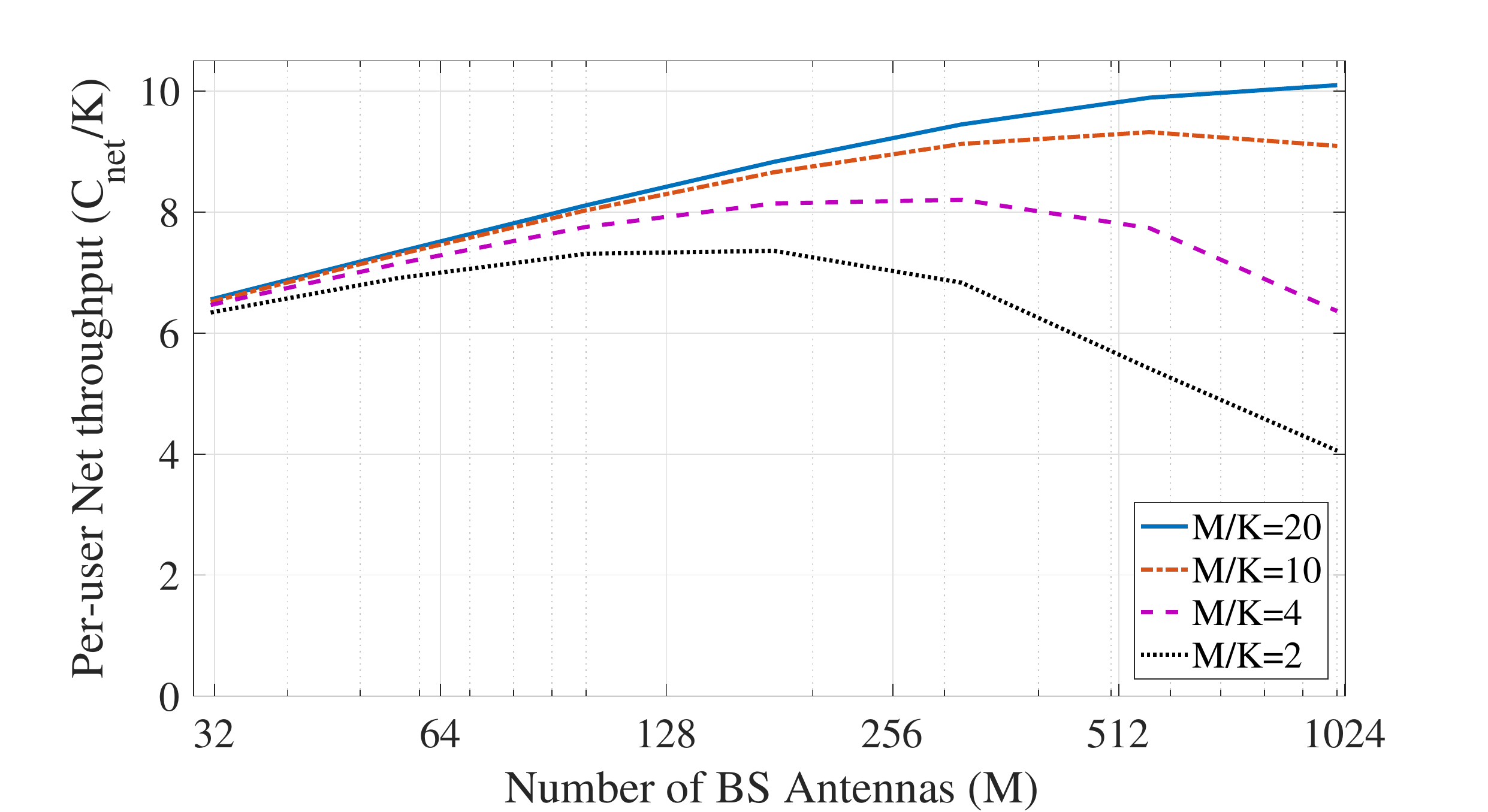}
	\caption{Net rate versus $M$ of the optimal assignment, for various $M/K$ settings ($L=27, N_{coh} = 2000$)}
	\label{Fig:VariousM/K_Simulation}
\end{figure}

We also observed the distribution of the achievable rate for each user. In Fig. \ref{Fig:CDF}, the cumulative distribution function (CDF) curves of three pilot assignment rules are illustrated: the optimal assignment, conventional full pilot reuse, and the random assignment.
Both Fig.\ref{Fig:CDF_K1} and Fig.\ref{Fig:CDF_K40} show that the proposed optimal assignment helps all users to enjoy high achievable rates, compared to other pilot assignment rules. This is because the optimal assignment allocates additional pilots so as to relieve the most severely affected users first. 



\begin{figure}
	\centering
	\subfloat[][K=1]{\includegraphics[height=45mm]{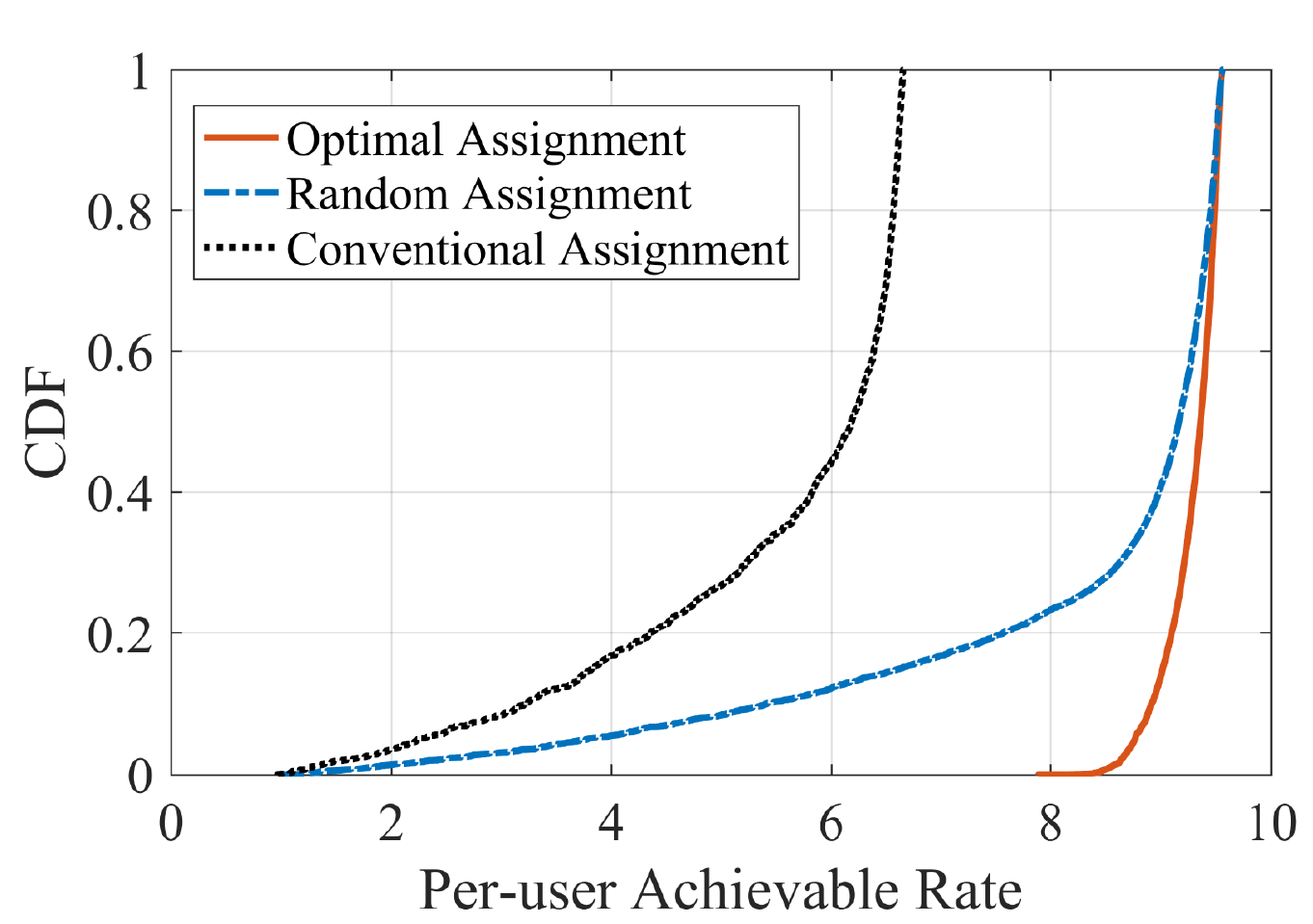}\label{Fig:CDF_K1}}
	\quad \quad
	\subfloat[][K=40]{\includegraphics[height=50mm]{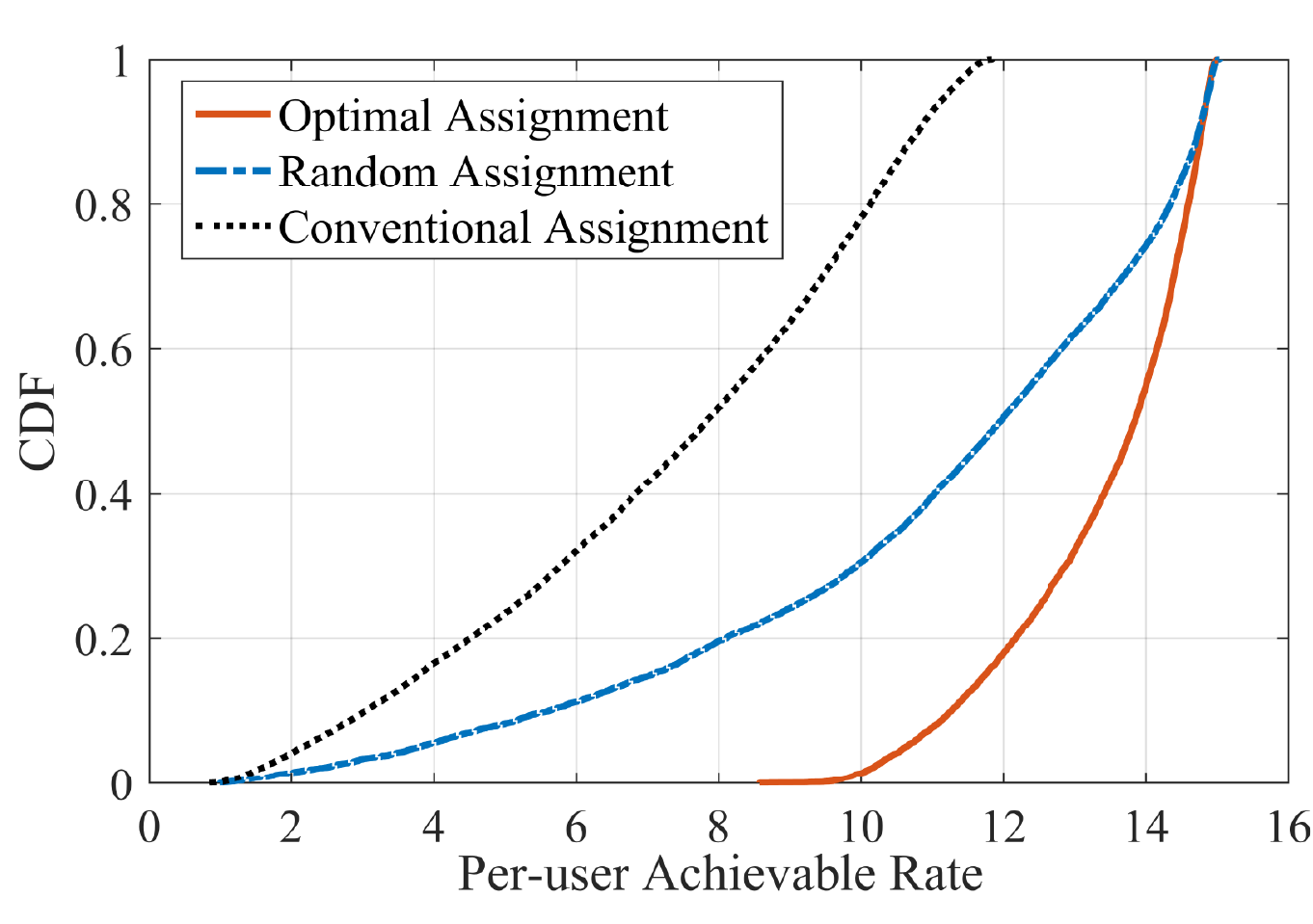}\label{Fig:CDF_K40}}
	\caption{Cumulative distribution function of per-user achievable rate for various pilot assignments (L=27, M=100)}
	\label{Fig:CDF}
\end{figure}

%
%

\subsection{Optimal Training Time Portion of the Coherent Time Interval}

An interesting remaining question is: what should be the optimal portion of the time allocated for pilot (and thus for training) in massive MIMO with interfering cells as $N_{coh}$ increases? 
Fig. \ref{Fig:Portion} illustrate how much training is used for optimal assignment in the cases of $K=1$ and $K=14$. For a given $K$, the ratio $\frac{N_{pil}(\mathbf{p}_{opt})}{N_{coh}}$ is calculated for various $N_{coh}$ values. As $N_{coh}/K$ increases, both plots in Fig. \ref{Fig:Portion} show a non-vanishing portion of the coherence time used for the pilots in optimal assignment, while the pilot portion shrinks in full pilot reuse. For a given $K$, the curve for optimal assignment consists of a family of curves corresponding to $\mathbf{p}'_{opt}(N_{p0})$ for $N_{p0}=K, K+2, K+4, \cdots, LK/3$. Since the optimal $\mathbf{p}_{opt}$ changes as $N_{coh}/K$ increases, we have a discontinuous function made up of a family of exponentially decaying functions. 

%
%
%

\begin{figure}
	\centering
	\subfloat[][K=1]{\includegraphics[width=0.38\textwidth]{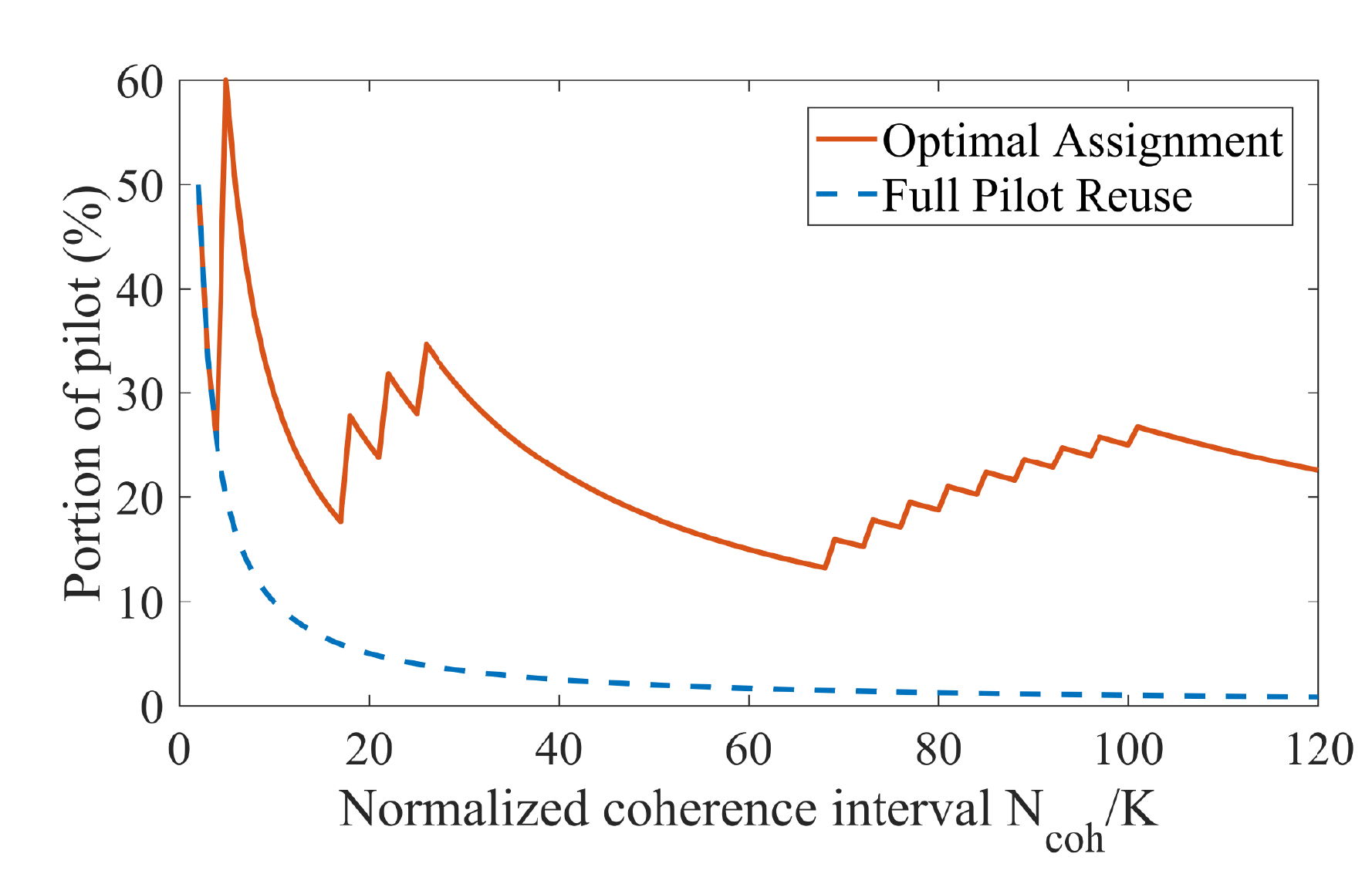}\label{Fig:Portion1}}
	\quad \quad
	\subfloat[][K=14]{\includegraphics[width=0.38\textwidth]{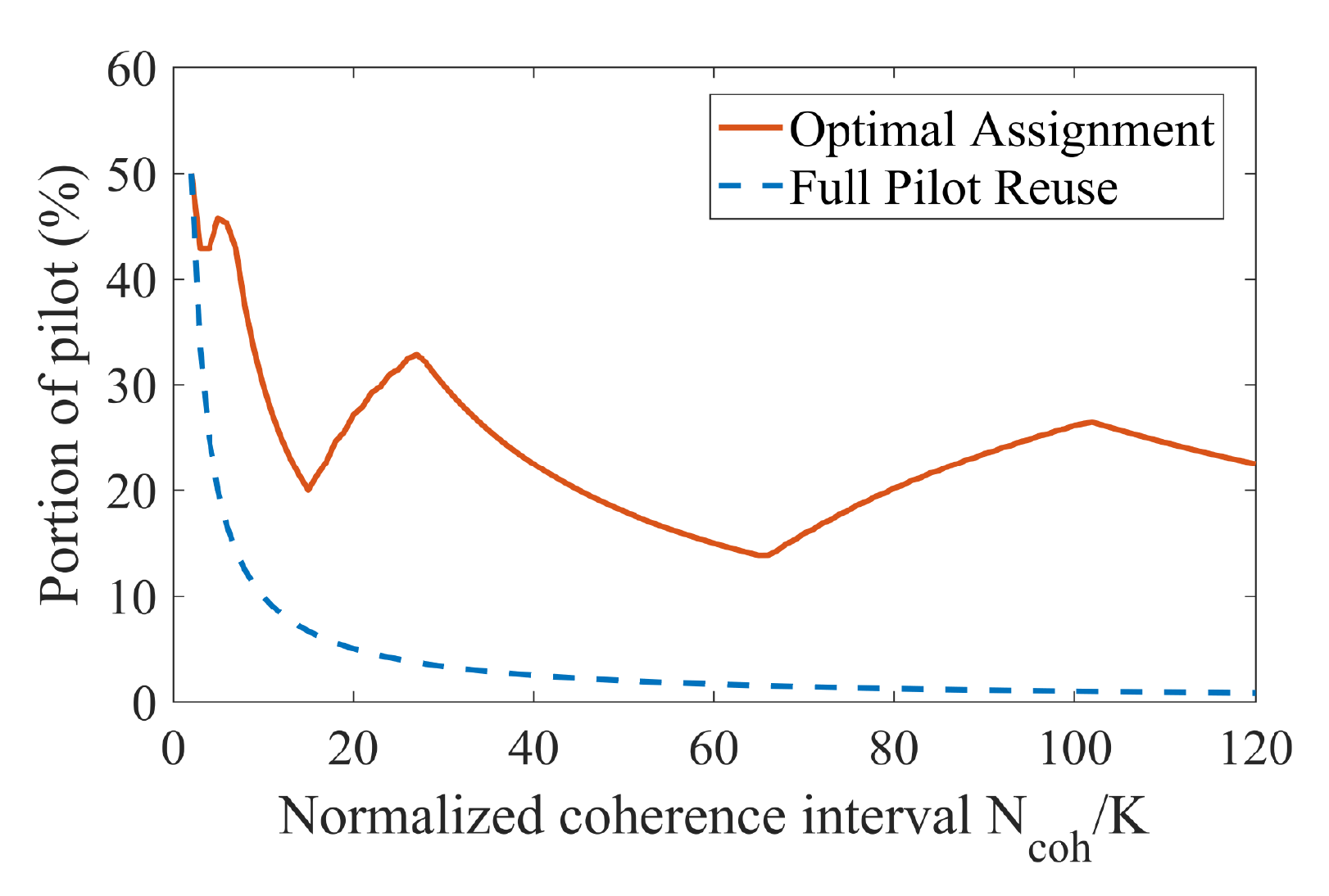}\label{Fig:Portion14}}
	\caption{Optimal portion of the coherent time dedicated for training ($L=81$)}
	\label{Fig:Portion}
\end{figure}

\subsection{Comparison with Cell Partitioning in Frequency Reuse}

The analysis in this paper is based on hierarchical partitioning originated from 3-way partitioning, as described in section III-A. However, this 3-way partitioning is identical to the partitioning used in frequency reuse factor of three. In Appendix of \cite{mac1979advanced}, fundamentals of hexagonal cellular geometry are discussed. In hexagonal cell systems, the number of cells per cluster for frequency reuse is given by the form $N=i^2+ij+j^2$, where $i$ and $j$ are integers which shape the geometry of the clusters. For example, $i=1, j=1$ result in $N=3$, corresponding to a frequency reuse factor of three. 

The hierarchical partitioning considered in this paper can be described by the geometrical model in \cite{mac1979advanced} to a certain extent. Consider a tree structure having leaves with same depth. This tree structure corresponds to pilot assignment vectors with only one nonzero element like $(1,0,\cdots,0)$, $(0,3,0,\cdots,0)$, $(0,0,9,0,\cdots,0)$, $\cdots$. Each assignment method is identical to the cell geometry of frequency reuse factor of $1,3,9,\cdots$ which can be generated by appropriate $i$ and $j$ (i.e., $1=1^2+1\cdot0+0^2$, $3=1^2+1\cdot1+1^2$, $9=3^2+3\cdot0+0^2$, $\cdots$). However, other pilot assignment vectors considered here cannot be expressed by the form of \cite{mac1979advanced}. For example, pilot assignment in Fig. \ref{Fig:coloring example} is generated by a mixture of frequency reuse factors of $3$ and $9$, which yield leaf nodes at depth $1$ and $2$.

Finally, we note that the suggested pilot assignment strategy can obviously be utilized in conjunction with the existing frequency reuse scheme. 

\subsection{Optimization problem for maximizing net weighted sum-rate}

In the example discussed towards the end of Section \ref{Multiuser_numerical result}, for $N_{coh}=500$, the maximum net sum-rate is achieved when there are 
$K=500/2=250$ users. Using the plots in Fig. \ref{Fig:K=14 Simulation2}, the corresponding normalized net sum-rate is
$C_{net}/N_{coh}=3.2$, which translates to a per-user net rate of $(3.2 \times 500)/250 = 6.4$. 
In comparison, for the case of the fixed number of $K=25$ users, the full-pilot reuse gives $C_{net}/N_{coh}=0.6$
or a per-user net rate of $(0.6 \times 500)/25 = 12$ (and using the optimal pilot assignment this rate improves to 16.2). This argument shows that while keeping the ratio $N_{coh}/K$ to 2 would maximize the net sum-rate, it may degrade individual user rates to a level that would be 
highly undesirable in many real-world scenarios. In practical applications, maximizing the sum-rate may not be an ideal strategy; often it would make sense to maximize a weighted sum-rate (WSR). Finding general pilot assignment solutions that maximize WSR is an interesting research direction, which we postpone to next paper.


\subsection{Impact with regards to ultra-densification}
According to current literatures \cite{gotsis2016ultradense, bhushan2014network,ge20165g}, one of the main directions for 5G is ultra-dense network, where density of BSs in given area gets higher in order to support data demand from massive devices. Specifically, heterogeneous network (HetNet) with both macro-BS and micro-BS are considered in \cite{sanguinetti2015interference}. Here, a single macro-BS (with massive antennas) at the center of each cell communicates with high-mobility UTs throughout the cell, while several micro-BSs (with single antenna each) spread in the cell supports low-mobility UTs within a local area. Macro-BS and micro-BSs are connected by wireless links, by allocating some antennas in macro-BS for backhaul/fronthauling. 
In this HetNet scenario with multiple interfering macro-cells, suggested optimal pilot assignment can be applied to increase the throughput of high-mobility users supported by macro-BSs.

\section{Conclusion}
In a massive MIMO system with interfering cells, allowing neighboring cells to use different
sets of pilot sequences can effectively mitigate the pilot contamination problem and increase
the achievable net throughput of the system, when the appropriate pilot assignment strategy is
applied. Assuming hexagonal cells and equi-distance hierarchical partitioning, an optimal pilot
assignment strategy has been identified that gives substantial throughput advantages relative to
random pilot assignment or full pilot reuse when the given coherence time interval $N_{coh}$ and
the number of users $K$ has sufficiently large ratio. As $N_{coh}/K$ increases, the optimal number
of pilots also grows, where the additional pilots are allocated so as to relieve
the most severely affected users first. Finally, we add that
it would be interesting to further explore pilot assignment strategies when the objective is not
about maximizing the sum rate but rather on guaranteeing some minimal performance level to
all users or maximizing a weighted sum rate to prioritize the services.


%

\appendices
\section{Proofs of Lemmas 1 and 2}

\begin{figure}[!t]
	\centering
	\includegraphics[height=20mm]{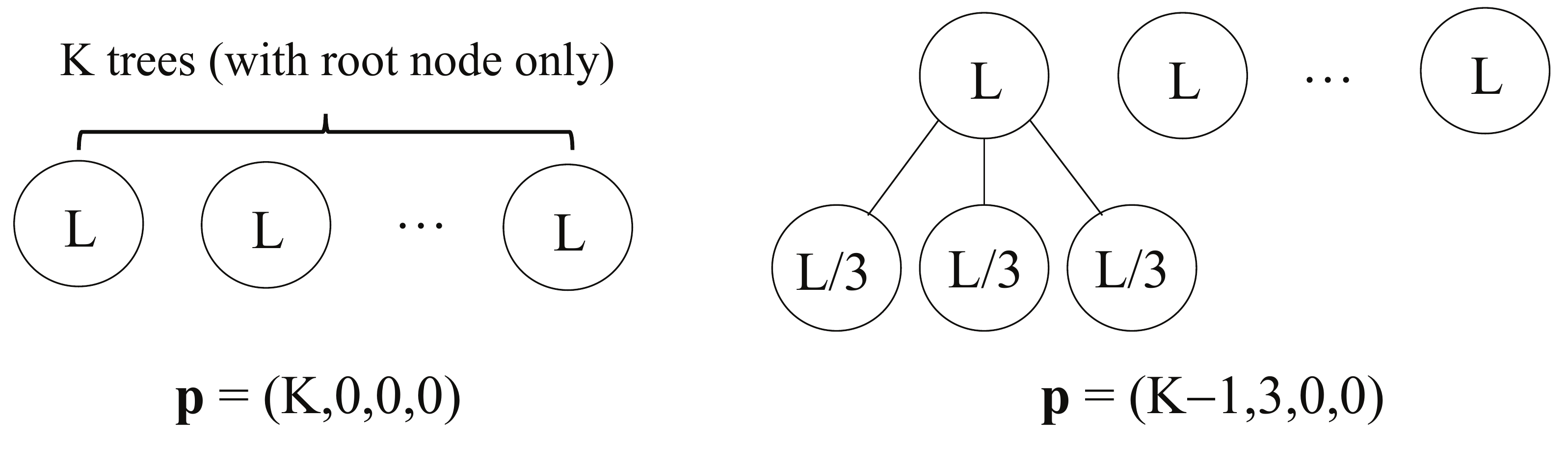}
	\caption{Tree structure for $K$ users}
	\label{Lemma1_proof_1}
\end{figure}

\subsection{Lemma 1}
\begin{proof}[Proof:\nopunct]
In order to support $K$ users in each cell, valid pilot assignment starts with full pilot reuse which is illustrated in left side of Fig. \ref{Lemma1_proof_1}. Since $N_{pil}(\mathbf{p})$ is the number of leaves, this assignment has $N_{pil}(\mathbf{p})=K$. When 3-way partitioning is applied to a tree (as on the right side of Fig. \ref{Lemma1_proof_1}), the number of leaf nodes $N_{pil}(\mathbf{p})$ increases by 2 (reducing 1 leaf node denoted as $L$, and increasing 3 leaf nodes denoted as $L/3$). Similarly, consecutive 3-way partitioning increases $N_{pil}(\mathbf{p})$ by 2, until all $K$ trees reaches the maximum depth, $log_3 L -1$. Therefore, the maximum value of $N_{pil}(\mathbf{p})$ is $LK/3$, which completes the proof.
\end{proof}

\subsection{Lemma 2}
\begin{proof}[Proof:\nopunct]
Denote $I_{0} \triangleq log_3 L -1 $.
Consider arbitrary $\mathbf{p}=(p_{0}, p_{1}, \cdots, p_{I_{0}}) \in \Omega(N_{p0})$ and its corresponding transition vector $\mathbf{t}=(t_{0}, t_{1}, \cdots, t_{I_{0} - 1})$ defined by (\ref{ptot}).
Then, using the relationship given in (\ref{ptot}), we can obtain a closed-form expression for $t_{i}$ for $1\leq i \leq  I_{0}-1$ as follows:
\begin{align*}
t_{i} &= -p_{i} + 3t_{i-1} = -p_{i} + 3(-p_{i-1} + 3t_{i-2}) \\
&= -p_{i} + 3\left(-p_{i-1} + 3(-p_{i-2}+3t_{i-3})\right)= \cdots\nonumber\\
&= -\displaystyle\sum_{s=0}^{i-1}3^{s}p_{i-s} + 3^{i}t_{0}= -\displaystyle\sum_{s=0}^{i-1}3^{s}p_{i-s} + 3^{i}(K-p_{0}) \nonumber\\
&= -\displaystyle\sum_{s=0}^{i}3^{s}p_{i-s} + K3^{i} = 3^{i}(K-p_{0}-\frac{p_{1}}{3}-\cdots-\frac{p_{i}}{3^{i}}).
\end{align*}
The upper bound and lower bound for $t_{i}$ (for $1\leq i \leq  I_{0}-1$) can be obtained as:
\begin{align*}
t_{i} &= 3^{i}(K-p_{0}-\frac{p_{1}}{3}-\cdots-\frac{p_{i}}{3^{i}}) \leq K3^{i},
\end{align*}
since $p_{s} \geq 0 \enspace \forall s \in \{0, 1, \cdots, I_{0}\}$ by (\ref{pilot assgning vector}).
Here, the equality holds iff $p_{s} = 0 \enspace \forall s \in \{0, 1, \cdots, i\}$. 
Similarly, using (\ref{pilot assgning vector}), 
\begin{align*}
t_{i} &= 3^{i}(K-p_{0}-\frac{p_{1}}{3}-\cdots-\frac{p_{i}}{3^{i}}) \nonumber\\
&= 3^{i}(\displaystyle\sum_{s=0}^{I_{0}}p_{s}3^{-s} - \displaystyle\sum_{s=0}^{i}p_{s}3^{-s}) = 3^{i}\displaystyle\sum_{s=i+1}^{I_{0}}p_{s}3^{-s} \geq 0
\end{align*}
where equality holds iff $p_{s} = 0 \enspace \forall s \in \{i+1, i+2, \cdots, I_{0}\}$.
Therefore, for $1\leq i \leq  I_{0}-1$, $0 \leq t_{i} \leq K3^{i}$ holds. However, $0 \leq t_{0}=K-p_{0} \leq K=K3^{0}$ since $0 \leq p_{0} \leq K$ by (\ref{pilot assgning vector}). Moreover, using the relationship in (\ref{ptot}), we can confirm $t_{i}$ are integers, since $p_{i}$ are integers by (\ref{pilot assgning vector}). The overall result can be combined as
\begin{equation}\label{proof_4_2_equation2}
t_{i} \in  \{0, 1, 2, \cdots, K3^{i}\} \qquad \forall i \in \{0, 1, \cdots, I_{0}-1 \}.
\end{equation}
On the other hand, using (\ref{ptot}), (\ref{ttop}) and $\mathbf{p} \in \Omega(N_{p0})$,
\begin{align*}
\displaystyle\sum_{i=0}^{I_{0}-1}t_{i} &= t_{0} + \displaystyle\sum_{i=1}^{I_{0}-1}t_{i} = (K-p_{0}) + \displaystyle\sum_{i=1}^{I_{0}-1}(-p_{i} + 3t_{i-1}) \nonumber\\
&= K - \displaystyle\sum_{i=0}^{I_{0}-1}p_{i} + 3\displaystyle\sum_{i=0}^{I_{0}-2}t_{i}\nonumber\\
&= K - \displaystyle\sum_{i=0}^{I_{0}}p_{i} + 3\displaystyle\sum_{i=0}^{I_{0}-1}t_{i} = K - N_{p0} + 3\displaystyle\sum_{i=0}^{I_{0}-1}t_{i},
\end{align*}
which is to say
\begin{equation}\label{proof_4_2_equation3}
\displaystyle\sum_{i=0}^{I_{0}-1}t_{i}=\frac{N_{p0}-K}{2}.
\end{equation}
(\ref{proof_4_2_equation2}) and (\ref{proof_4_2_equation3}) complete the proof. 
\end{proof}




\ifCLASSOPTIONcaptionsoff
  \newpage
\fi



%

\bibliographystyle{IEEEtran}
\bibliography{IEEEabrv,WC2016}
%
%


\begin{IEEEbiography}[{\includegraphics[width=1in,height=1.25in,clip,keepaspectratio]{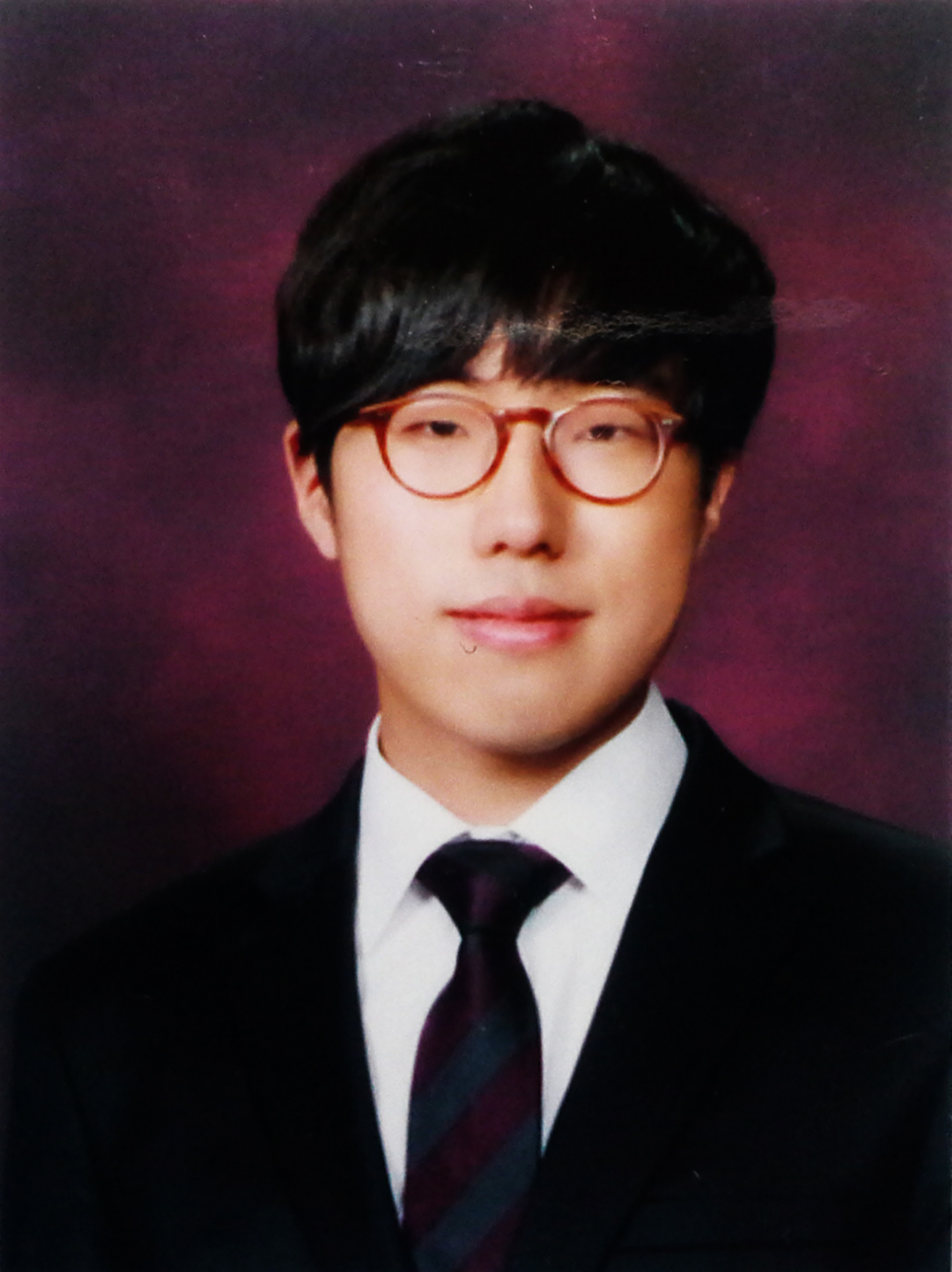}}]{Jy-yong Sohn}
	(S'15) received the B.S. and M.S. degrees in electrical engineering from the Korea Advanced Institute of Science and Technology (KAIST), Daejeon, Korea, in 2014 and 2016, respectively. He is currently pursuing the Ph.D. degree in KAIST. His research interests include massive MIMO effects on wireless multi cellular system and 5G Communications, with a current focus on distributed storage and network coding. He is a recipient of the IEEE international conference on
	communications (ICC) best paper award in 2017. 
\end{IEEEbiography}
\begin{IEEEbiography}[{\includegraphics[width=1in,height=1.25in,clip,keepaspectratio]{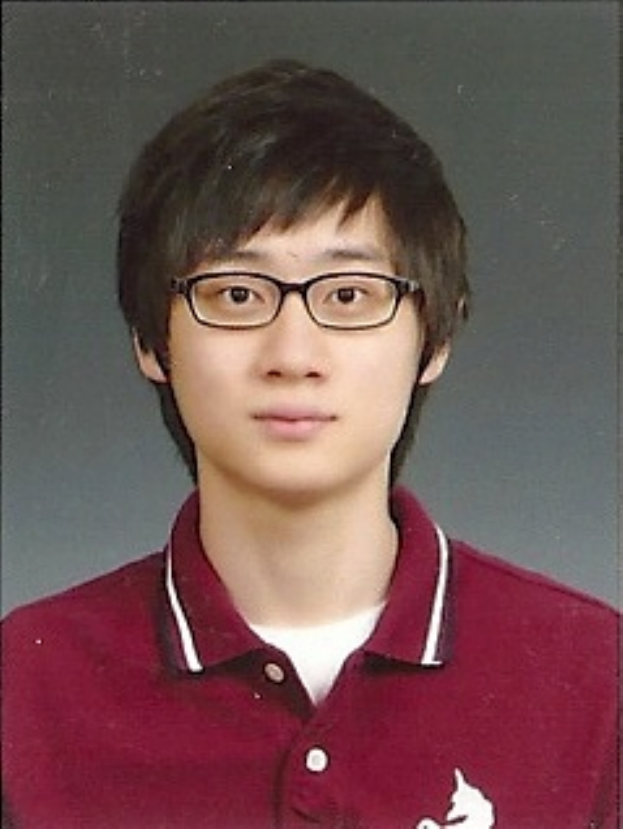}}]{Sung Whan Yoon}
	(S'12) received the B.S. and M.S. degrees in electrical engineering from the Korea Advanced Institute of Science and Technology (KAIST), Daejeon, Korea, in 2011 and 2013, respectively. He is currently pursuing the Ph.D degree in KAIST. His main research interests are in the field of coding and signal processing for wireless communication \& storage, especially massive MIMO, polar codes and distributed storage codes. He is a co-recipient of the IEEE international conference on
	communications (ICC) best paper award in 2017.
\end{IEEEbiography}
\begin{IEEEbiography}[{\includegraphics[width=1in,height=1.25in,clip,keepaspectratio]{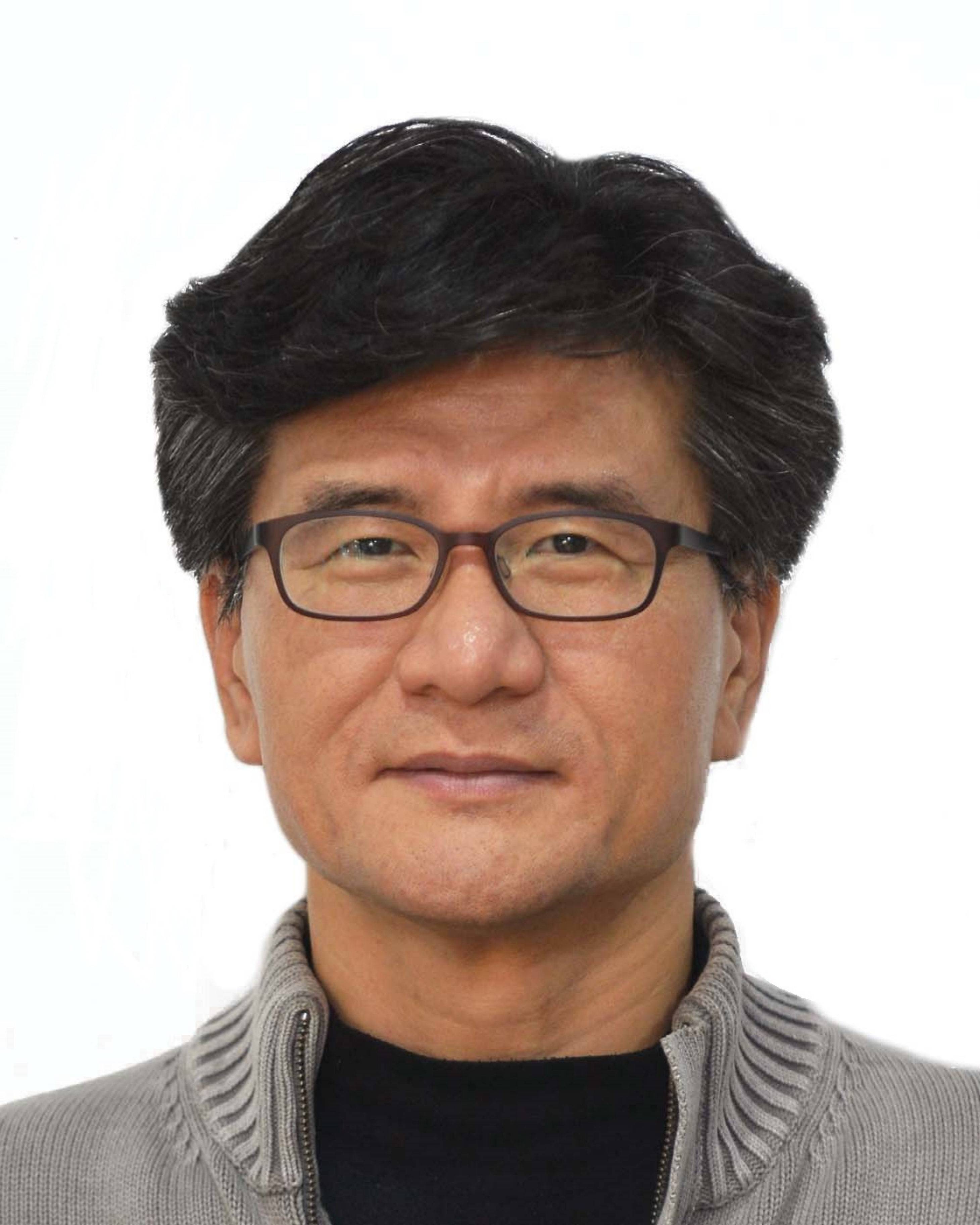}}]{Jaekyun Moon}
	(F'05) received the Ph.D degree in electrical and computer engineering at Carnegie Mellon University, Pittsburgh, Pa, USA. He is currently a Professor of electrical engineering at KAIST. From 1990 through early 2009, he was with the faculty of the Department of Electrical and Computer Engineering at the University of Minnesota, Twin Cities. He consulted as Chief Scientist for DSPG, Inc. from 2004 to 2007. He also worked as Chief Technology Officer at Link-A-Media Devices Corporation. His research interests are in the area of channel characterization, signal processing and coding for data storage and digital communication. Prof. Moon received the McKnight Land-Grant Professorship from the University of Minnesota. He received the IBM Faculty Development Awards as well as the IBM Partnership Awards. He was awarded the National Storage Industry Consortium (NSIC) Technical Achievement Award for the invention of the maximum transition run (MTR) code, a widely used error-control/modulation code in commercial storage systems. He served as Program Chair for the 1997 IEEE Magnetic Recording Conference. He is also Past Chair of the Signal Processing for Storage Technical Committee of the IEEE Communications Society, In 2001, he cofounded Bermai, Inc., a fabless semiconductor start-up, and served as founding President and CTO. He served as a guest editor for the 2001 IEEE JSAC issue on Signal Processing for High Density Recording. He also served as an Editor for IEEE TRANSACTIONS ON MAGNETICS in the area of signal processing and coding for 2001-2006. He is an IEEE Fellow.
\end{IEEEbiography}\vfill

%






\end{document}